\documentclass[12pt, onecolumn]{IEEEtran}
\makeatletter
\def\ps@headings{%
\def\@oddhead{\mbox{}\scriptsize\rightmark \hfil \thepage}%
\def\@evenhead{\scriptsize\thepage \hfil \leftmark\mbox{}}%
\def\@oddfoot{}%
\def\@evenfoot{}}
\makeatother
\pagestyle{headings}
\usepackage{setspace}
\doublespacing
\usepackage[letterpaper]{geometry}
\geometry{top=0.75in, bottom=1in, left=0.625in, right=0.625in}
\usepackage{graphicx}
\usepackage{amsmath,amssymb}
\usepackage{amsmath}
\usepackage{multirow}
\usepackage{verbatim}
\usepackage{epstopdf}
\usepackage{epsfig}
\usepackage{cite}
\usepackage{color}
\usepackage{algorithm2e}
\usepackage{algorithmic}
\usepackage{subfigure}
\usepackage{lipsum}
\usepackage[framemethod=default]{mdframed}
\newmdenv[linecolor=black,backgroundcolor=white]{myframe}

\bibliographystyle{plain}

\DeclareMathOperator{\argmax}{argmax}

\newcommand{\st}{\text{s.t.}}

\newtheorem{theorem}{Theorem}[section]
\newtheorem{remark}[theorem]{Remark}
\newtheorem{proposition}[theorem]{Proposition}
\newtheorem{lemma}[theorem]{Lemma}

\newtheorem{exam}{Example}
\newtheorem{proof}{Proof}

\ifodd 0
\newcommand{\rev}[1]{{\color{blue}#1}} 
\newcommand{\com}[1]{\textbf{\color{red}(COMMENT: #1)}} 
\newcommand{\clar}[1]{\textbf{\color{green}(NEED CLARIFICATION: #1)}}

\newcommand{\response}[1]{\textbf{\color{magenta}(RESPONSE: #1)}} 
\else
\newcommand{\rev}[1]{#1}
\newcommand{\com}[1]{}
\newcommand{\clar}[1]{}
\newcommand{\response}[1]{}
\fi
\newcommand{\RNum}[1]{\uppercase\expandafter{\romannumeral #1\relax}}

\begin{document}
\title{Revisiting Optimal Power Control: Its Dual Effect on SNR and Contention}
\author{
\IEEEauthorblockN{Yang Liu$^{1}$, Mingyan Liu$^{1}$ and Jing Deng$^{2}$}\\
\IEEEauthorblockA{
$^1$Electrical Engineering and Computer Science, University of Michigan, Ann Arbor\\
\{youngliu, mingyan\}@umich.edu\\
$^2$Department of Computer Science, University of North Carolina at Greensboro\\
jing.deng@uncg.edu
}
}

\maketitle
\nocite{*}

\begin{abstract}
\rev{In this paper we study a transmission power-tune/control problem in the context of 802.11 Wireless Local Area Networks (WLANs) with multiple (and possibly densely deployed) access points (APs). Previous studies on power control tend to focus on one aspect of the control, either its effect on transmission capacity (PHY layer effect) assuming simultaneous transmissions, or its effect on contention order (MAC layer effect) by maximizing spatial reuse.  We observe that power control has a {\em dual effect}:  it affects both spatial reuse and capacity of active transmission; moreover, maximizing the two separately is not always aligned in maximizing system throughput and can even point in opposite directions. 
In this paper we introduce an optimization formulation that takes into account this dual effect, by measuring the impact of transmit power on system performance from both PHY and MAC layers. 
We show that such an optimization problem is intractable and develop an analytical framework to construct simple yet efficient solutions.  Through numerical results, we observe clear benefits of this dual-effect model compared to solutions obtained that try to maximize spatial reuse and transmission capacity separately. 
This problem does not invoke cross-layer design, as the only degree of freedom in design resides with transmission power. It however highlights the complexity in tuning certain design parameters, as the change may manifest itself differently at different layers which may be at odds.} 

We further form a game theoretical framework and investigate above power-tune problem in a strategic network. We show that with decentralized and strategic users, the Nash Equilibrium (N.E.) of the corresponding game is in-efficient and thereafter we propose a punishment based mechanism to enforce users to adopt the social optimal strategy profile under both perfect and imperfect sensing environments.
\end{abstract}

\section{Introduction}
Power-tune has emerged as an important issue in an IEEE 802.11 WLAN network of multiple interacting users (Access Points, or APs). Earlier classical results with focus on power-tune may be classified into the following two independent approaches. 

The first relies on a PHY-layer framework in interference-bounded networks, i.e., the optimal power-tune problem is defined with respect to the Signal-Noise-Ratio (SNR) of each AP or the entire network. Within this framework, each AP's transmission power has two contradicting roles: The first is that a higher transmission power will improve the noise resistance capability for its own communication and thus potentially the network capacity. The other role is the unavoidable interaction with other APs. A higher transmission power will contribute higher noise/interference to other APs using the same channel (we assume Orthogonal Frequency Division Multiplexing, OFDM, at PHY layer and thus we will not consider intra-channel interference).  Many results have been established in this framework with different techniques focusing on either centralized or distributed solutions.  

The second class of results stems from MAC layer techniques by trying to reduce the level of contention  within a network, or improving \emph{spatial reuse order}, as more generally referred to. 
Specifically, when users fall into each other's audible range, transmission back-off under CSMA/CA is triggered to resolve contention and enable sharing. Therefore, decreasing users' transmission range helps improve spatial reuse of a given channel.  It follows that they are often modeled as congestion games or other similar graph problems 
(more in Section \ref{sec:related}). 

Even though conceptually both frameworks aim at optimizing system performance, e.g., overall throughput, the technical objectives and thus the net impact under the two are clearly not always aligned, and in fact can be quite different and even point in opposite directions.  To illustrate, consider maximizing users' 
achievable throughput or capacity without considering the induced spatial contention relationship; the resulting power-tune can create areas of very high contention order. 
Thus, even though a user's (or the network's) transmission capacity/rate maybe maximized on a per transmission basis, significant amount of air time may be spent in the back-off process instead of data transmission, leading to wasted spectrum resources.
The opposite may also be true.  If we simply control the contention topology of the network, the transmission power settings may be such that users do not have sufficient noise resistance capability and thus fall short of the theoretically achievable capacity.  In this case, even though we may have successfully reduced the contention and saved a lot of air time, the quality of active transmissions (or on a per-transmission basis) may be low.  

In short, reducing transmission power has a dual effect on the MAC and PHY layers: it can help increase spatial reuse order under CSMA/CA, but can at the same time decrease noise resistance capacity and therefore the transmission capacity. 
A desirable solution should thus take both effects into consideration in determining the optimal power control.  This is strictly speaking not a cross-layer problem, as the only degree of freedom in design resides with transmission power, i.e., there is no joint design or feedback between different layers.  This problem simply highlights the complexity in tuning certain design parameters, as the change may manifest itself differently at different layers which could be self-defeating as illustrated above.  

In this study we approach this problem by introducing a performance measure (or utility function) based on the power-tune impact on both PHY and MAC layers simultaneously. An interesting technical aspect of this formulation is the combination of both continuous (SNR and PHY) and discrete (MAC or graph-based) elements in a single optimization problem.  Not surprisingly, this leads to  
an intractable optimization problem, 
whose properties and structures we then investigate to help construct good and efficient solution techniques. 
Extensive simulation is conducted to verify the effectiveness and performance of our solution approach. 
An equally important aspect of our study, besides solving the above optimization problem, is to obtain insight into how the resulting power-tune differs from the two approaches outlined earlier, each focusing on the effect on a single layer, respectively. 

Our formulation and solution are given within a centralized framework.  A natural next step is to examine distributed implementations of the solution, and similar optimization problems when users are strategic.  These remain interesting directions of future research, but are out of the scope of the present paper. 

The remainder of the paper is organized as follows. Section \ref{sec:model} gives the system model and problem formulation, while Section \ref{sec:central} characterizes the optimal solution. Section \ref{sec:simulate} provides extensive numerical results to evaluate our approach. 
Section \ref{sec:related} presents a literature review most relevant to the present study and Section \ref{sec:conclusion} concludes the paper.


\section{System model}\label{sec:model}

\subsection{Preliminaries}

Consider a WLAN network with $N$ APs denoted by the set $\Omega = \{1,2,...,N\}$. Each AP is associated with a number of stations with whom it communicates. 
Denote an AP's transmission power space (i.e., the set of power levels it may employ) by $\mathcal Q_i, i \in \Omega$. Different from many prior works, here we do {\em not} assume any finiteness of $\mathcal Q_i$; instead,  we will show that the finiteness of the {\em optimal} power profile follows naturally from our formulation.  We will assume $\mathcal Q_i, i\in \Omega$ are all closed and use $\overline{\mathcal P}_i$ and $\underline{\mathcal P}_i$ to denote the maximum and minimum value in $\mathcal Q_i$, respectively. 
The transmission power profile of all users is denoted by $\mathcal P = [\mathcal P_1,\mathcal P_2,...,\mathcal P_N]$. 

Each AP also has a certain attempt rate for channel access under IEEE 802.11, and these are denoted by the vector $\mathbf{p}:=[p_1,p_2,...,p_N]$, also referred to as the attempt rate profile. Channel gain (or path loss) from user $i$ to $j$ is denoted by $h_{ij}$. We will assume $h_{ij}, i,j \in \Omega$ stay unchanged during a single transmission; alternatively, we may view $h_{ij}$ as the expectation of channel dynamics. $\mathcal N_0$ denotes the average noise level, and $\mathcal P^i_{cs}$ denotes the carrier sensing (CS) threshold of the $i-$th AP.

\rev{For the rest of the paper we will use the terms {\em AP} and {\em user} interchangeably.} 

\subsection{Contention domain}
Due to the fact that many hardware/circuits put a requirement on CS signal's strength, some CS signals cannot be correctly decoded and the corresponding back-off actions will not be triggered; only those with strength higher than the CS threshold can be correctly identified. We thus define two notions of a {\em contention domain} for user/AP $i$. The first one $\Delta^r_i$, the receive contention domain, is the set of users/APs whose CS signals can be correctly decoded by user $i$; while the other $\Delta^t_i$, the transmit contention domain, is the set of users who can decode user $i$'s CS signals correctly. Mathematically we have
\begin{align}
\Delta^r_i = \{j: \mathcal P_j\cdot h_{ji} \geq \mathcal P^i_{cs}, j \in \Omega \backslash \{i\}\}
\label{eq-Delta-i}
\end{align}
\begin{align}
\Delta^t_i = \{j: \mathcal P_i\cdot h_{ij} \geq \mathcal P^j_{cs}, j \in \Omega \backslash \{i\}\}
\label{eq-Delta-i}
\end{align}
By definition, contention domain is closely related to \emph{spatial reuse}. With a larger contention domain, the degree of spatial reuse is potentially smaller around that user. 
Define $n_i(\mathbf{\mathcal P})$ to be the number of competing users of user $i$ under power profile $\mathbf{\mathcal P}$; i.e., $n_i(\mathbf{\mathcal P}):=|\Delta^r_i|$.  This will also be referred to as the contention order for user $i$.

\subsection{Neighborhood reaching threshold}

Consider AP $i$ and the maximum (resp. minimum) transmission power it can use without reaching (resp. so that it can still reach) another AP $k$, expressed as follows (assumed to exist): 
\begin{align}
{\mathcal P}^{-}_{ik}:&= \max\{\mathcal P_i: \mathcal P_i \cdot h_{ik} < \mathcal P^k_{cs}, \mathcal P_i \in \mathcal Q_i\} \\
{\mathcal P}^{+}_{ik}:&= \min\{\mathcal P_i: \mathcal P_i \cdot h_{ik} \geq \mathcal P^k_{cs}, \mathcal P_i \in \mathcal Q_i\}~. 
\end{align} 
\rev{To make it complete for $k=i$ we have
\begin{align}
{\mathcal P}^{-}_{ii}:&=  \underline{\mathcal P}_i, {\mathcal P}^{+}_{ii}:=\overline{\mathcal P}_i 
\end{align}
} 
Denote the set of these neighbors reaching thresholds for AP $i$ as
\begin{align}
\tilde{\mathcal Q}_i = \cup_{k\in \Omega} \{{\mathcal P}^{-}_{ik},{\mathcal P}^{+}_{ik} \}.  
\end{align}
%
Denote the neighbor reaching profile space for the whole network as 
\begin{align}
\tilde{\mathcal Q} := \tilde{\mathcal Q}_1 \times \tilde{\mathcal Q}_2 \times ... \times \tilde{\mathcal Q}_N ~. 
\end{align}
Since we are considering a finite size network (i.e., the number of APs, $N$, is a finite positive integer), this profile space is finite, i.e., $|\tilde{\mathcal Q}_i| < \infty, \forall i \in \Omega$, and  consequently $|\tilde{\mathcal Q}| < \infty$. 

\subsection{A performance measure/utility function}

From AP $i$'s point of view, its transmission power setting has the following implications: 
\begin{itemize}
\item[\RNum{1}.] Higher transmission power will increase AP $i$'s received SNR ($\text{SNR}_i$) by its associated stations. 
\item[\RNum{2}.] Higher transmission power will cause higher interference to APs outside $\Delta^r_i \cup \Delta^t_i$.
\item[\RNum{3}.] Higher transmission power will add $i$ to some other AP $j$'s contention domain.
\end{itemize}

As a result, AP $i$'s perceived performance, or utility $\mathcal U_i$, as a function of the transmission power profile ${\mathcal P}$ and attempt rate profile $\mathbf{p}$ across all APs, may be captured by the following expression: 
\begin{align}
\mathcal U_i(\mathbf{\mathcal P}, \mathbf{p}) = \mathcal S_i(\mathbf{\mathcal P}, \mathbf{p})\cdot \mathcal C_i(\mathbf{\mathcal P}, \mathbf{p}), ~~~ i \in \Omega, 
\end{align}
where $\mathcal S_i$ is the ``sharing'' function representing AP $i$'s share of channel access under CSMA/CA-type of collision avoidance mechanism, and $\mathcal C_i$ is the ``capacity'' function representing the rate/quality of active transmissions under $\mathbf{\mathcal P}$ and $\mathbf{p}$. 


Under 802.11, we can approximate $\mathcal S_i$ using the probability of successful channel reservation given by 
$\prod_{j \in \Delta^r_i} (1-p_j)p_i$, 
where the dependence on the transmission power profile ${\mathcal P}$ is implicit through the contention domain $\Delta^r_i = \Delta^r_i({\mathcal P})$. 
Assuming a \emph{fair} WLAN network with $0 < p_1 = p_2 =...=p_N = p_c < 1$, we then have the following form of the sharing function: 
\begin{align}
\mathcal S_i(\mathbf{\mathcal P, \mathbf{p}}) = \prod_{j \in \Delta^r_i} (1-p_j)p_i = (1-p_c)^{n_i(\mathcal P)}p_c, 
\end{align}
which approximates the air time share of AP $i$ within its contention domain.

\rev{$\mathcal C_i$ is intended to capture the rate or capacity of active transmission attained by AP $i$.  To make this concrete, we will focus on the downlink capacity from the AP to its associated stations.  As the stations' locations can change dynamically and are often unknown, we will not explicitly model this level of detail and simply assume that the stations are sufficiently close to their associated AP.  Consequently, their capacity may be approximated using the transmit power by the AP (rather than the received power at a station) and the interference at the AP (rather than at a station)},    
in the standard Shannon formula: (similar formulation can also be found in \cite{DBLP:conf/globecom/TanPC05}): 
\begin{align}
\mathcal C_i(\mathbf{\mathcal P}, \mathbf{p}) = \log (1 + \frac{\mathcal P_i}{\mathcal N_0 + \sum_{j \in \bar{\Delta}^r_i \cap \bar{\Delta}^t_i} \mathcal S_j(\mathbf{\mathcal P, \mathbf{p}})\cdot\mathcal P_j\cdot h_{ji}})~,  
\label{eq-Ci-P-p}
\end{align}
where $\bar{\Delta}^{r}_i$ ($\bar{\Delta}^{t}_i$) denotes the complement of AP $i$'s contention domains, i.e., $\bar{\Delta}^{r}_i = \Omega_{-i} \backslash \Delta^{r}_i$, reflecting the fact that the interference comes primarily from APs outside the contention domain as a result of the back-off mechanism of IEEE 802.11 collision avoidance.
\section{The optimal power-tune problem and its characterization}\label{sec:central}
In this section, we formally define our optimization problem and do so in comparison with its single-layer counterparts, i.e., that aim at only the PHY or MAC layer effect, respectively.  We then characterize its features using a two-user example. 

\subsection{Considering only PHY layer effects}
When we limit our attention to PHY layer effects of power control, typically no contention is considered and parallel transmissions are implicitly assumed. Therefore each single user's transmissions will contribute to other users noise/interference. Problems along this line have been well investigated in the literature, see e.g., \cite{DBLP:conf/globecom/TanPC05, Chiang:2008:PCW:1454705.1454706}. Specifically these assumptions result in the following optimization problem
$$
\begin{array}{ccll}
\textbf{(PPHY)}&\max & \sum_{i \in \Omega} \log (1+\frac{\mathcal P_i}{\mathcal N_0 + \sum_{j \neq i} \mathcal P_{j}\cdot h_{ji}})\\
&\st& \mathcal P_{j} \in \mathcal Q_i, \forall i \in \Omega.\end{array}
$$
This rate maximization problem is in general hard to solve. Previous work has focused on different approximation techniques, see e.g., \cite{DBLP:conf/globecom/TanPC05}.  In order to obtain comparable results in order to compare this with our optimization formulation, in our numerical experiments (Section \ref{sec:simulate}) we shall use 
the following simple approximation 
\begin{align}
&\log (1+\frac{\mathcal P_i}{\mathcal N_0 + \sum_{j \neq i} \mathcal P_{j}\cdot h_{ji}}) \approx \log(\frac{\mathcal P_i}{\mathcal N_0})-\frac{\sum_{j \neq i} \mathcal P_{j}\cdot h_{ji}}{\mathcal N_0}
\end{align}
Since both terms $\log(\frac{\mathcal P_i}{\mathcal N_0})$ and $-\frac{\sum_{j \neq i} \mathcal P_{j}\cdot h_{ji}}{\mathcal N_0}$ are concave, we now have an approximate/relaxed optimization problem which is convex:  
$$
\begin{array}{ccll}
\textbf{(R-PPHY)}&\max & \sum_{i \in \Omega} \{\log(\frac{\mathcal P_i}{\mathcal N_0})-\frac{\sum_{j \neq i} \mathcal P_{j}\cdot h_{ji}}{\mathcal N_0}\}\\
&\st& \mathcal P_{j} \in \mathcal Q_i, \forall i \in \Omega.\end{array}
$$
This problem can be efficiently solved using classical convex optimization techniques (assuming all $\mathcal Q_i$ are convex or piece-wise convex).  These are used in our numerical results provided later, and the algorithmic details are omitted for brevity. 
 
\subsection{Considering only MAC layer effects} 

We next limit our attention to MAC layer effects of power control, in which case the objective is typically to minimize the sum of 
contention orders over all users, given as follows: 
\begin{align}
\min \sum_{i \in \Omega} n_i (\mathcal P)~, 
\end{align}
see, e.g., a similar objective used in \cite{DBLP:conf/infocom/WanCDWY12}. 
However, without any constraint on $\mathcal P$, the above minimization could lead to somewhat pathologic solutions, i.e., with very low power, we can obtain  
$ \sum_{i \in \Omega} n_i (\mathcal P) \approx 0$ thereby achieving the minimum value. However, with very low transmission powers and relatively constant background noise, each AP's SNR is significantly impacted leading to poor transmission performance. 
Consequently, in order to make this model comparable with others considered in our numerical experiments, we will instead consider a similar optimization problem with an SNR constraint.  Specifically, we will try to  minimize the total contention order within the network, subject to a minimum requirement on each AP's SNR, as shown below: 
$$
\begin{array}{ccll}
\textbf{(PMAC)}&\min & \sum_{i \in \Omega} n_i(\mathcal P)\\
&\st&  \frac{\mathcal P_i}{\mathcal N_0 + \sum_{j \in \bar{\Delta}^r_i \cap \bar{\Delta}^t_i}S_j(\mathbf{\mathcal P, \mathbf{p}}) \cdot \mathcal P_j \cdot h_{ji}} \geq \text{SNR}_0, \\
&&\forall i \in \Omega.\end{array}
$$
Here we use $\text{SNR}_0$ to denote some baseline SNR that needs to be met at each AP's transmission. 

The above problem has a mixture of a combinatorial term and continuous term in the following sense: while the SNR is continuous w.r.t. setting transmission power $\mathcal P_i$, the contention domains $\Delta_i$s are discrete. Thus the problem is hard to solve in general. We thus consider a relaxation of the above problem. Since we have
\begin{align}
\mathcal P_j \cdot h_{ji} < \mathcal P^i_{cs}, \forall j \in \bar{\Delta}^r_i \cap \bar{\Delta}^t_i~, 
\end{align}
the inequality below holds immediately
\begin{align}
\sum_{j \in \bar{\Delta}^r_i \cap \bar{\Delta}^t_i} S_j(\mathbf{\mathcal P, \mathbf{p}})\cdot\mathcal P_j \cdot h_{ji} < \sum_{j \in \bar{\Delta}^r_i \cap \bar{\Delta}^t_i} S_j(\mathbf{\mathcal P, \mathbf{p}})\cdot \mathcal P^i_{cs}
\end{align}
Moreover, we have 
\begin{align}
\frac{\mathcal P_i}{\mathcal N_0 + \sum_{j \in \bar{\Delta}^r_i \cap \bar{\Delta}^t_i} \mathcal P_j \cdot h_{ji}} > \frac{\mathcal P_i}{\mathcal N_0 +\sum_{j \in \bar{\Delta}^r_i \cap \bar{\Delta}^t_i} S_j(\mathbf{\mathcal P, \mathbf{p}})\cdot \mathcal P^i_{cs}}. 
\end{align}
Thus, we have the following relaxed problem that are solvable:
$$
\begin{array}{ccll}
\textbf{(R-PMAC)}&\min & \sum_{i \in \Omega} n_i(\mathcal P)\\
&\st&\frac{\mathcal P_i}{\mathcal N_0 +\sum_{j \in \bar{\Delta}^r_i \cap \bar{\Delta}^t_i} S_j(\mathbf{\mathcal P, \mathbf{p}})\cdot \mathcal P^i_{cs}}\geq \text{SNR}_0, \forall i \in \Omega.
\end{array}
$$

\begin{theorem}
In solving \textbf{R-PMAC} there is no loss of optimality to limit our attention to the space $\tilde{\mathcal Q}$, i.e., if an optimal solution exists, we can always find an optimal solution within the space $\tilde{\mathcal Q}$. 
\end{theorem}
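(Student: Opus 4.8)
The plan is to exploit that in \textbf{R-PMAC} both the objective $\sum_{i\in\Omega} n_i(\mathcal P)$ and the combinatorial part of every constraint depend on $\mathcal P$ only through the \emph{contention structure}, i.e., the collection $\{\Delta^r_i,\Delta^t_i\}_{i\in\Omega}$ of all receive/transmit contention domains, which is a piecewise-constant function of $\mathcal P$. Indeed, whether $j\in\Delta^r_i$ (equivalently $i\in\Delta^t_j$) is decided only by comparing $\mathcal P_j$ against the fixed threshold $\mathcal P^i_{cs}/h_{ji}$, and the breakpoints of all such comparisons for a given user are precisely the neighbor-reaching thresholds collected in $\tilde{\mathcal Q}_i$. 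Hence, for each $i$, the box $[\underline{\mathcal P}_i,\overline{\mathcal P}_i]$ splits into finitely many intervals on each of which the \emph{entire} contention structure — and therefore every order $n_k(\mathcal P)$ and every share $S_k(\mathcal P,\mathbf{p})=(1-p_c)^{n_k(\mathcal P)}p_c$ — stays constant.

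First I would fix an optimal solution $\mathcal P^\star$ (assumed to exist) and freeze its contention structure. The decisive point, and the reason the relaxation was taken, is that the relaxed constraint replaced the continuous interference term $\mathcal P_j h_{ji}$ by the constant $\mathcal P^i_{cs}$; consequently, once the structure is frozen, the set $\bar{\Delta}^r_i\cap\bar{\Delta}^t_i$ and every $S_j$ appearing in user $i$'s denominator are constants, so that user $i$'s constraint collapses to a pure lower bound on its own power, $\mathcal P_i\ge \text{SNR}_0\,(\mathcal N_0+\sum_{j\in\bar{\Delta}^r_i\cap\bar{\Delta}^t_i} S_j\,\mathcal P^i_{cs})$. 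In particular, increasing any single $\mathcal P_i$ while remaining inside its structure-preserving interval can only relax user $i$'s own constraint and leaves every other user's constraint, as well as the objective, entirely unaffected.

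Next I would raise each $\mathcal P_i$ to the upper endpoint of the interval containing $\mathcal P^\star_i$, simultaneously for all $i$. I would show this endpoint lies in $\tilde{\mathcal Q}_i$: if $i$ does not already reach every neighbor, let $k_0$ be the not-yet-reached neighbor with the smallest reaching threshold; then the endpoint equals $\mathcal P^{-}_{ik_0}=\max\{\mathcal P_i\in\mathcal Q_i:\mathcal P_i h_{ik_0}<\mathcal P^{k_0}_{cs}\}$, which satisfies $\mathcal P^{-}_{ik_0}\ge\mathcal P^\star_i$ and has the same reach set as $\mathcal P^\star_i$; otherwise the endpoint is $\overline{\mathcal P}_i=\mathcal P^{+}_{ii}$. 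Because each power moves only within its own interval, no reaching threshold is crossed and the full contention structure is preserved; by the monotonicity of the previous step the new profile is still feasible, and being structurally identical to $\mathcal P^\star$ it attains the same (optimal) objective value. As it now lies in $\tilde{\mathcal Q}$, the theorem follows.

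I expect the main obstacle to be making the decoupling rigorous: one must check that freezing the contention structure genuinely renders each constraint's denominator constant — which rests on the fact that, after relaxation, the $S_j$ terms depend on $\mathcal P$ solely through the frozen orders $n_j$ — and that nudging one user's power upward within its interval can neither enlarge another user's contention domain nor alter its interference sum. The remaining work is bookkeeping: matching each interval's upper endpoint with a specific element of $\tilde{\mathcal Q}_i$, and treating the boundary cases $k=i$ (the self-thresholds $\mathcal P^{\pm}_{ii}$) and the case where $i$ already reaches all neighbors.
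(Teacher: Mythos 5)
Your proposal is correct and follows essentially the same route as the paper: the paper's proof also takes an optimal profile with some $\mathcal P^*_j \notin \tilde{\mathcal Q}_j$ and raises it to the nearest threshold $\mathcal P^{-}_{jk}$ above, observing that this preserves the contention topology (hence the objective and all $n_i$, $S_j$ terms) while only relaxing the SNR constraint. You simply spell out more carefully why the relaxed constraint's denominator is constant once the contention structure is frozen and why the interval's upper endpoint lies in $\tilde{\mathcal Q}_i$ --- details the paper leaves implicit.
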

\begin{proof}
Suppose there is an optimal power profile with some element, say $\mathcal P^*_j \notin \tilde{\mathcal Q}_j$. Then consider relaxing/increasing $\mathcal P^*_j$ to the nearest ${\mathcal P}^{-}_{jk}, \forall k\in \Omega$. Note that such a change would not modify the contention topology and thus all $n_i$ values remain the same, without violating the corresponding SNR constraint.  Thus we have found an optimal solution within $\tilde{\mathcal Q}$. 
\end{proof}
\begin{remark}
This theorem allows us to focus on a finite set of solutions instead of all possible solutions.
\end{remark}

\subsection{Considering dual effects}

We now formalize the optimal power-tune problem outlined earlier that takes into account both the PHY and MAC layer effects. Specifically, we will seek centralized solutions to the following social welfare maximization problem: 
$$
\begin{array}{ccll}
\textbf{(P)}&\max & \mathcal U = \sum_{i \in \Omega} \mathcal U_i(\mathbf{\mathcal P}, \mathbf{p})\\
&\st& \mathcal P_i \in \mathcal Q_i, i \in \Omega. \end{array}
$$
As the power profile space $ \mathcal Q := \mathcal  Q_1 \times \mathcal  Q_2 \times ... \times \mathcal Q_N$ is potentially infinite, and $\mathcal U$ is in general a non-convex and non-differentiable function w.r.t. $\mathcal P$, the optimization problem is NP-hard.  \rev{To illustrate, Fig. \ref{func_spe} shows three examples of the sum utility as a function of the power $\mathcal P_i$ of AP $i$, under different parameter settings. 
 As can be seen, there lack properties commonly used to derive efficient solution techniques (e.g., differentiability,  convexity).} 
\rev{There are, however, some interesting features such as the prominent step shape shown in the result.  This observation motivated key results in the subsequent subsections.}
\begin{figure}[h!]
\centering
\includegraphics[width=0.5\textwidth,height=0.25\textwidth]{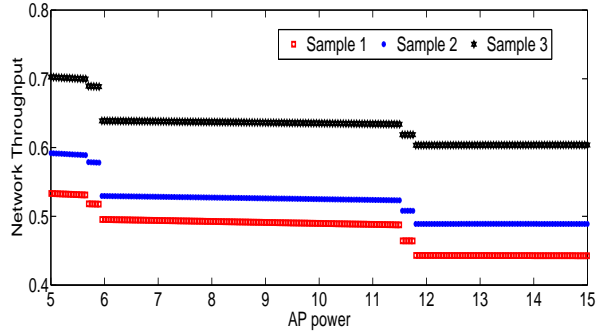}
\caption{Examples of $\mathcal U$ w.r.t. a specific $\mathcal P_i$}\label{func_spe}
\end{figure}

\subsection{An illustrative example: two-user case with centralized solution}
Consider the special case with $N=2$ and $p_c=\frac{1}{2}$, i.e., a system of two users attempting to access the channel each with probability 1/2.  To gain some insight into the solution to the corresponding optimization problem (P), we discuss the following four cases. 

\textbf{CASE 1.}  $\mathcal{P}_1 \leq \mathcal{P}^{-}_{12},\mathcal{P}_2 \leq \mathcal{P}^{-}_{21}$

In this case neither user is in the other's CS range and thus the objective function becomes
\begin{align}
\log(1+\frac{\mathcal P_1}{\mathcal N_0+\frac{1}{4}\cdot \mathcal P_2\cdot h_{21}})+\log(1+\frac{\mathcal P_2}{\mathcal N_0+\frac{1}{4}\cdot\mathcal P_1\cdot h_{12}})
\end{align} 
We first fix $\mathcal P_2$ and analyze the above function w.r.t. $\mathcal P_1$.  Basic algebraic calculation shows that the above objective function is equivalent to (noting $h_{12} = h_{21}$ and denote $h:=\frac{1}{4}\cdot h_{12}=\frac{1}{4}\cdot h_{21}$)
\begin{eqnarray*}
\log \left\{1+\frac{1}{\mathcal N_0+\mathcal P_2\cdot h}(\mathcal P_1+\mathcal P_2/h+\frac{\mathcal N_0+\mathcal P_2\cdot h-\mathcal P_2\cdot \mathcal N_0/h}{\mathcal N_0+\mathcal P_1\cdot h}) \right\}
\end{eqnarray*}
Due to the monotonicity of the $\log$ function, the optimization problem is equivalent to maximizing
\begin{align}\label{eqn:1}
\mathcal P_1+\mathcal P_2/h+\frac{\mathcal N_0+\mathcal P_2\cdot h-\mathcal P_2\cdot \mathcal N_0/h}{\mathcal N_0+\mathcal P_1\cdot h}~. 
\end{align}
If $\mathcal N_0+\mathcal P_2\cdot h-\mathcal P_2\cdot \mathcal N_0/h < 0$, then Eqn. (\ref{eqn:1}) is strictly increasing w.r.t. $\mathcal P_1$ and thus $\mathcal P^*_1 = \mathcal{P}^{-}_{12}$. 
If $\mathcal N_0+\mathcal P_2\cdot h-\mathcal P_2\cdot \mathcal N_0/h \geq 0$, then 
Eqn. (\ref{eqn:1}) is convex and the maximizer is an end point, i.e., $\mathcal P^*_1 \in \{\underline{\mathcal P}_1,  \mathcal{P}^{-}_{12}\}$. By symmetry we have similar results for user 2.

\textbf{CASE 2.} $\mathcal{P}_1 \geq \mathcal{P}^{+}_{12},\mathcal{P}_2 \leq \mathcal{P}^{-}_{21}$

\rev{In this case user 1 is in user 2's receive contention domain, but not the other way round.} 
It follows that the objective function in this case reduces to the following: 
\begin{align}
\log(1+\frac{\mathcal P_1}{\mathcal N_0})+\frac{1}{2}\log(1+\frac{\mathcal P_2}{\mathcal N_0})
\end{align}
Obviously this function is increasing w.r.t. $\mathcal P_1$ and $\mathcal P_2$ and therefore we have $\mathcal P^*_1 = \overline{\mathcal P}_1,\mathcal P^*_2 = \mathcal{P}^{-}_{21}$.  

\textbf{CASE 3.} $\mathcal{P}_1 \leq \mathcal{P}^{-}_{12},\mathcal{P}_2 \geq \mathcal{P}^{+}_{21}$

This is essentially the same as \textbf{CASE 2} and thus omitted.  

\textbf{CASE 4.} $\mathcal{P}_1 \geq \mathcal{P}^{+}_{12},\mathcal{P}_2 \geq \mathcal{P}^{+}_{21}$

In this case both users are in each other's contention domain, and the resulting objective function is 
\begin{eqnarray}
\frac{1}{2}\log(1+\frac{\mathcal P_1}{\mathcal N_0})+\frac{1}{2}\log(1+\frac{\mathcal P_2}{\mathcal N_0}). 
\end{eqnarray}
It is then straightforward to show that $\mathcal P^*_1 = \overline{\mathcal P}_1$ and $\mathcal P^*_2 = \overline{\mathcal P}_2$. 

\rev{The above example aims at conveying the intuition that the optimal power levels are likely to be at the neighborhood reaching thresholds or the maximum power limit.  The next section  shows that we can indeed restrict our attention to a finite set of these thresholds in search of an optimal solution.} 


\section{Solution Approach : A centralized View} 

\subsection{A Lower bound problem}
Recall that for $j \in \bar{\Delta}^r_i$, we have $\mathcal P_j \cdot h_{ji} < \mathcal P^i_{cs}$, i.e., the received signal strength is below the CS threshold.  Thus 
\begin{align}
\log (1&+\frac{\mathcal P_i}{\mathcal N_0 + \sum_{j \in \bar{\Delta}^r_i \cap \bar{\Delta}^t_i}S_j(\mathbf{\mathcal P, \mathbf{p}}) \cdot \mathcal P_j \cdot h_{ji}}) \nonumber \\
&> \log (1+\frac{\mathcal P_i}{\mathcal N_0 + \sum_{j \in \bar{\Delta}^r_i \cap \bar{\Delta}^t_i}S_j(\mathbf{\mathcal P, \mathbf{p}}) \cdot \mathcal P^i_{cs}}) ~. 
\end{align}
We use this relationship to form the following lower bound problem. 
$$
\begin{array}{ccll}
\textbf{(PL)}&\max &  \mathcal U_L= \sum_{i \in \Omega} (1-p_c)^{n_i(\mathcal P)}p_c\cdot \\
&~~&~~\log (1+\frac{\mathcal P_i}{\mathcal N_0 +  \sum_{j \in \bar{\Delta}^r_i \cap \bar{\Delta}^t_i}S_j(\mathbf{\mathcal P, \mathbf{p}}) \cdot \mathcal P^i_{cs}})\\
&\st& \mathcal P_i \in \mathcal Q_i, \forall i \in \Omega. \end{array}
$$
\begin{lemma}
For an optimal solution $\mathcal P^*$ to \textbf{(PL)}, we have
$\mathcal P^* \in \tilde{\mathcal Q}$.  That is, there is no loss of optimality is restricting the solution space to 
 $\tilde{\mathcal Q}$ in searching for an optimal solution. 
\end{lemma}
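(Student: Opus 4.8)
The plan is to argue by contradiction, exploiting the key structural simplification built into \textbf{(PL)}: because each interference term uses the fixed carrier-sensing threshold $\mathcal P^i_{cs}$ in place of the received power $\mathcal P_j\cdot h_{ji}$, the denominator of every capacity term depends on the power profile only through the contention topology, and not through the actual power magnitudes. First I would partition the profile space $\mathcal Q$ into regions on which the contention topology is constant, and observe that these regions are delimited exactly by the thresholds collected in $\tilde{\mathcal Q}$.

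Concretely, fix any profile $\mathcal P$ and a coordinate $j$, and vary $\mathcal P_j$ alone. The only membership relations that can change are those of the form $\mathcal P_j\cdot h_{jk}\geq \mathcal P^k_{cs}$, which govern whether $j$ enters some $\Delta^r_k$ and whether some $k$ enters $\Delta^t_j$; each such change occurs precisely when $\mathcal P_j$ crosses one of the thresholds $\mathcal P^{-}_{jk},\mathcal P^{+}_{jk}\in\tilde{\mathcal Q}_j$. Hence on the open interval between two consecutive thresholds of $\tilde{\mathcal Q}_j$, every contention domain, and therefore every contention order $n_i(\mathcal P)$ and every sharing value $S_i=(1-p_c)^{n_i}p_c$, stays constant.

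Next I would establish strict monotonicity of $\mathcal U_L$ in $\mathcal P_j$ on such an interval. Writing the $i$-th summand of $\mathcal U_L$ as $(1-p_c)^{n_i}p_c\cdot\log(1+\mathcal P_i/D_i)$ with $D_i:=\mathcal N_0+\sum_{l\in\bar{\Delta}^r_i\cap\bar{\Delta}^t_i}S_l\cdot\mathcal P^i_{cs}$, the decisive observation is that $\mathcal P_j$ appears neither in any $D_i$ (which involves only the constants $\mathcal N_0$, $\mathcal P^i_{cs}$ and the region-constant values $S_l$) nor in any numerator other than the $j$-th. Consequently, holding the other coordinates fixed, $\mathcal U_L$ equals a constant plus $(1-p_c)^{n_j}p_c\cdot\log(1+\mathcal P_j/D_j)$, which is strictly increasing in $\mathcal P_j$ since the coefficient is positive and $D_j>0$.

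Finally, suppose an optimal $\mathcal P^*$ had a coordinate $\mathcal P^*_j$ lying strictly inside an interval between consecutive thresholds, equivalently $\mathcal P^*_j\notin\tilde{\mathcal Q}_j$. Since $\mathcal P^*_j<\overline{\mathcal P}_j=\mathcal P^{+}_{jj}$, one can choose a feasible $\mathcal P'_j$ strictly between $\mathcal P^*_j$ and the next threshold above, remaining in the same interval; by the previous step this strictly increases $\mathcal U_L$, contradicting optimality. Therefore every coordinate of an optimal solution is a threshold, i.e.\ $\mathcal P^*\in\tilde{\mathcal Q}$. The crux of the argument, and the step I expect to require the most care, is the decoupling claim that $\mathcal P_j$ does not enter the denominators $D_i$ of the other terms; this is exactly where the lower-bound relaxation (the substitution of $\mathcal P^i_{cs}$ for $\mathcal P_j\cdot h_{ji}$) is indispensable, and it is precisely what fails for the original objective in \textbf{(P)}.
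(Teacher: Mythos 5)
Your proof is correct and follows essentially the same route as the paper's: argue by contradiction, observe that the contention topology (hence every $n_i$ and $S_i$) is constant between consecutive thresholds of $\tilde{\mathcal Q}_j$, and use strict monotonicity of $\mathcal U_L$ in $\mathcal P_j$ on such a region to push any non-threshold coordinate upward. You actually justify the monotonicity step more carefully than the paper does (the decoupling of $\mathcal P_j$ from all denominators is precisely what the $\mathcal P^i_{cs}$ substitution buys); the one cosmetic difference is that the paper raises $\mathcal P^*_j$ to the threshold $\mathcal P^{-}_{jk}$ itself, which by definition lies in $\mathcal Q_j$ and preserves the topology, whereas your choice of a feasible point strictly inside the interval requires $\mathcal Q_j$ to contain such a point and is most simply repaired by moving to the threshold as the paper does.
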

\begin{proof} 
For AP $i$, suppose there exists a $\mathcal P^*_i \not\in \tilde{{\mathcal Q}}_i$.  This  means one of the following must be true: (1) $\mathcal P^+_{ik} < \mathcal P^*_i < \mathcal P^-_{ij}$ for some $(k,j)$, (2) $\mathcal P^*_i < \mathcal P^-_{ij}$ for all $j$, and (3) $\mathcal P^+_{ik} < \mathcal P^*_i$ for all $k$. For the fist two cases, if we increase $\mathcal P_i$ from $\mathcal P^*_i$ to $\mathcal P^-_{ij}$, the resulting contention topology remain unchanged, i.e., the terms $\mathcal S_j, n_j(\mathcal P), \forall j \in \Omega$ stay unchanged, but $\mathcal P_i$ is now bigger and $\mathcal U_L$ is strictly increasing in $\mathcal P_i$.  This contradicts the optimality of $\mathbf {\mathcal P^*}$, so the first two cases cannot be true. If it's the third case and $\mathcal P^+_{ik} < \mathcal P^*_i$ for all $k$, then increasing $\mathcal P_i$ from $\mathcal P^*_i$ to $\overline{\mathcal P}_i$ results in the same argument as above, so (3) also cannot be true, completing the proof. 
\end{proof}
\begin{remark}
When $\mathcal P^i_{cs}$ is small or when the network is dense and large (i.e., with a large $N$), the lower bound problem will provide a good approximation for the original problem. 
\end{remark}

\subsection{An Upper bound problem}
We similarly form an upper bound to the original objective function: 
\begin{align}
\log (1&+\frac{\mathcal P_i}{\mathcal N_0 + \sum_{j \in \bar{\Delta}^r_i \cap \bar{\Delta}^t_i}S_j(\mathbf{\mathcal P, \mathbf{p}}) \cdot \mathcal P_j \cdot h_{ji}})  \nonumber \\
&\leq \frac{\mathcal P_i}{\mathcal N_0 + \sum_{j \in \bar{\Delta}^r_i \cap \bar{\Delta}^t_i}S_j(\mathbf{\mathcal P, \mathbf{p}}) \cdot \mathcal P_j \cdot h_{ji}} \cdot \log 2, 
\end{align}
and we have the following upper bound problem 
$$
\begin{array}{ccll}
\textbf{(PU)}&\max & \mathcal U_U=\sum_{i \in \Omega} (1-p_c)^{n_i(\mathcal P)}p_c\nonumber \\
&&\cdot \frac{\mathcal P_i}{\mathcal N_0 + \sum_{j \in \bar{\Delta}^r_i \cap \bar{\Delta}^t_i} S_j(\mathbf{\mathcal P, \mathbf{p}})\cdot \mathcal P_j \cdot h_{ji}}\\
&\st& \mathcal P_i \in \mathcal Q_i, \forall i \in \Omega.\end{array}
$$
\begin{lemma}\label{lem:convex}
$\mathcal U_U$ is piece-wise convex w.r.t. each $\mathcal P_i, i\in \Omega$. 
\end{lemma}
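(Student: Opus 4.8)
The plan is to fix every power component except $\mathcal P_i$ and to establish convexity of $\mathcal U_U$ on each subinterval of the range of $\mathcal P_i$ over which the contention topology is frozen. First I would observe that, as $\mathcal P_i$ sweeps its range, $i$ enters or leaves another AP's receive contention domain only at the neighborhood reaching thresholds $\{\mathcal P^{-}_{ik}, \mathcal P^{+}_{ik}\}_{k \in \Omega}$; these finitely many points partition the range into intervals on whose interior all contention domains are constant. Consequently every $n_j(\mathcal P)$, hence every sharing factor $S_j = (1-p_c)^{n_j(\mathcal P)} p_c$, is a constant there, and every interference index set $\bar{\Delta}^r_k \cap \bar{\Delta}^t_k$ is frozen. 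This reduces the lemma to showing that $\mathcal U_U$ is convex in $\mathcal P_i$ once all of these combinatorial quantities are held fixed.

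Next I would split $\mathcal U_U = \sum_k(\cdots)$ according to how $\mathcal P_i$ enters each summand, using the structural fact that $\Delta^r_i$ depends only on the powers of the \emph{other} APs, so that $n_i$ and $S_i$ do not vary with $\mathcal P_i$ at all. For the $k = i$ summand the denominator sums over $j \in \bar{\Delta}^r_i \cap \bar{\Delta}^t_i \subseteq \Omega_{-i}$ and therefore does not contain $\mathcal P_i$, while the numerator is $\mathcal P_i$; this summand is thus linear in $\mathcal P_i$. For each $k \neq i$ the numerator $\mathcal P_k$ is constant, and $\mathcal P_i$ can appear only through the single denominator term $S_i \mathcal P_i h_{ik}$ in the event $i \in \bar{\Delta}^r_k \cap \bar{\Delta}^t_k$; the summand then takes the form $C/(A + B\mathcal P_i)$ with $A = \mathcal N_0 + \sum_{j \in \bar{\Delta}^r_k \cap \bar{\Delta}^t_k, j \neq i} S_j \mathcal P_j h_{jk} > 0$, $B = S_i h_{ik} \geq 0$, and $C = (1-p_c)^{n_k} p_c \mathcal P_k \geq 0$. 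Otherwise the summand is constant in $\mathcal P_i$.

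It then remains to check that $g(\mathcal P_i) = C/(A + B\mathcal P_i)$ is convex on the piece. Since the noise floor $\mathcal N_0 > 0$ keeps $A + B\mathcal P_i > 0$ throughout, a direct computation gives $g''(\mathcal P_i) = 2CB^2/(A + B\mathcal P_i)^3 \geq 0$; equivalently $g$ is a nonnegative multiple of the reciprocal of a positive affine function, a standard convex function. On each piece $\mathcal U_U$ is therefore a finite sum of a linear term, finitely many terms of this convex form, and constants, and so is convex, which is exactly piece-wise convexity. The main obstacle is not this elementary convexity estimate but the bookkeeping of the first two steps: one must verify carefully that freezing the topology simultaneously freezes all $n_j$, all $S_j$, and all the sets $\bar{\Delta}^r_k \cap \bar{\Delta}^t_k$, and in particular that $\mathcal P_i$ enters every summand only affinely through the denominator, so that no hidden nonlinearity in $\mathcal P_i$ survives within a single piece.
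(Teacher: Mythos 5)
Your proof is correct and follows essentially the same route as the paper's: fix the other powers, freeze the contention topology on each interval between the neighborhood reaching thresholds, and verify convexity of each summand of $\mathcal U_U$ in $\mathcal P_i$ there. You are in fact slightly sharper than the paper on two points --- you note that the $k=i$ summand is actually \emph{linear} in $\mathcal P_i$ (the paper only calls it convex), and you supply the explicit second-derivative computation for the $C/(A+B\mathcal P_i)$ terms that the paper asserts without proof --- but these are refinements of the same argument, not a different approach.
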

\begin{proof} 
Consider $\mathcal P_i$ and fix the transmission power of all other APs. 
%
Suppose $\mathcal P_i \in [\mathcal P^{+}_{ik}, \mathcal P^{-}_{ij}]$ for some $k, j$.  Within this range, the contention topology remains the same, i.e., $(1-p_c)^{n_i(\mathcal P)}p_c$ is a constant for any value $\mathcal P_i$ takes within this interval.  
Next consider the second term in $\mathcal U_U$. $\mathcal P_i$ appears in 
this term in two forms: one as $\frac{\mathcal P_i}{\mathcal N_0 +  \sum_{l \in \bar{\Delta}^r_i \cap \bar{\Delta}^t_i}S_l(\mathbf{\mathcal P, \mathbf{p}})\cdot \mathcal P_l\cdot h_{li}}$ which is convex  w.r.t. $\mathcal P_i$, and the other as $\frac{\mathcal P_l}{\mathcal N_0 +  \sum_{m\in \bar{\Delta}^r_l \cap \bar{\Delta}^t_l}S_m(\mathbf{\mathcal P, \mathbf{p}})\cdot \mathcal P_m\cdot h_{ml}}$ for some $l$ such that $i \in \bar{\Delta}^r_l \cap \bar{\Delta}^t_l$, in the form of $S_i(\mathbf{\mathcal P, \mathbf{p}})\cdot \mathcal P_i\cdot h_{il}$; 
these terms are also convex w.r.t. $\mathcal P_i$.  As the sum of convex functions is convex, we have established the convexity of $\mathcal U_{U}$. 

When $\mathcal P^+_{ik} < \mathcal P_i$ for all $k$, $\mathcal U_U$ is convex w.r.t. $\mathcal P_i$ over the interval  $[\max \mathcal P^+_{ik},\overline{\mathcal P}_i]$ using the same argument as above.  Similarly, when $\mathcal P_i < P^-_{ik}, \forall k \neq i$ $\mathcal U_U$ is convex w.r.t. $\mathcal P_i$ over the interval $[\underline{\mathcal P}_i,\min \mathcal P^-_{ik}]$. 
\end{proof}

\begin{lemma}
Suppose $\mathcal P^*_U$ is the optimal solution to (PU), we have $\mathcal P^*_U \in \tilde{\mathcal Q}$. 
\end{lemma}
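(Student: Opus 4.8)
The plan is to leverage the piece-wise convexity already established in Lemma~\ref{lem:convex}, together with the elementary fact that a convex function on a closed interval attains its maximum at one of the endpoints. I would argue coordinate by coordinate. Fix an optimal solution $\mathcal P^*_U$ and suppose, towards a contradiction, that some component $\mathcal P^*_{U,i} \notin \tilde{\mathcal Q}_i$. Freezing all other powers at their optimal values, $\mathcal P^*_{U,i}$ must lie in the interior of one of the convexity intervals identified in Lemma~\ref{lem:convex}, i.e.\ an interval of the form $[\mathcal P^{+}_{ik}, \mathcal P^{-}_{ij}]$ between two consecutive neighborhood-reaching thresholds, or one of the two extreme intervals $[\underline{\mathcal P}_i, \min_k \mathcal P^{-}_{ik}]$ or $[\max_k \mathcal P^{+}_{ik}, \overline{\mathcal P}_i]$.

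On any such interval the contention topology is constant, so the factors $(1-p_c)^{n_i(\mathcal P)}p_c$ are fixed, and by Lemma~\ref{lem:convex} the restriction of $\mathcal U_U$ to $\mathcal P_i$ is convex. A convex function on a closed interval cannot be strictly maximized at an interior point; its maximum over the interval is attained at an endpoint. Hence I would replace $\mathcal P^*_{U,i}$ by the endpoint achieving the larger value, obtaining a solution with $\mathcal U_U$ no smaller than at $\mathcal P^*_U$ but whose $i$-th component now lies in the threshold set. The crucial observation is that every endpoint of these intervals belongs to $\tilde{\mathcal Q}_i$: the interior breakpoints are precisely the thresholds $\mathcal P^{\pm}_{ik}$, while the two extreme endpoints $\underline{\mathcal P}_i$ and $\overline{\mathcal P}_i$ equal $\mathcal P^{-}_{ii}$ and $\mathcal P^{+}_{ii}$ under the $k=i$ convention, and therefore also lie in $\tilde{\mathcal Q}_i$.

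Since the sets $\tilde{\mathcal Q}_i$ depend only on the channel gains and carrier-sensing thresholds, and not on the power profile itself, moving one coordinate onto its threshold set does not alter the threshold sets of the remaining coordinates. I would therefore repeat the above step sequentially over all $i \in \Omega$, each time holding the already-adjusted coordinates fixed, until every component lies in its respective $\tilde{\mathcal Q}_i$. This produces a solution in $\tilde{\mathcal Q} = \tilde{\mathcal Q}_1 \times \cdots \times \tilde{\mathcal Q}_N$ whose objective value is at least that of $\mathcal P^*_U$, establishing that an optimal solution can be found in $\tilde{\mathcal Q}$.

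The heavy lifting, namely that the dependence of $\mathcal U_U$ on $\mathcal P_i$ through both its own SINR term and the interference denominators of the other APs preserves convexity, has already been absorbed into Lemma~\ref{lem:convex}, so the remainder is largely bookkeeping. The one point I would treat with care is the behavior at the interval endpoints, where the contention topology changes: I must verify that each endpoint is genuinely attained within its adjacent piece, which holds by the definitions of $\mathcal P^{-}_{ik}$ as a max and $\mathcal P^{+}_{ik}$ as a min over the closed set $\mathcal Q_i$, so that evaluating the convex piece at its endpoint is legitimate. If one wants the stronger conclusion that \emph{every} optimal solution lies in $\tilde{\mathcal Q}$, I would additionally observe that the own-SINR term is strictly increasing in $\mathcal P_i$ while the cross terms are strictly convex, so the maximum over each piece is attained strictly at an endpoint, ruling out interior maximizers.
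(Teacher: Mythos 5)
Your proof is correct and follows essentially the same route as the paper's: the paper's own proof is a one-line appeal to Lemma~\ref{lem:convex} plus the fact that a convex function on a closed interval is maximized at an endpoint, and your argument simply fills in the coordinate-wise bookkeeping (endpoints lying in $\tilde{\mathcal Q}_i$ via the $k=i$ convention, sequential replacement over $i \in \Omega$) that the paper leaves implicit.
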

\begin{proof} The result readily follows from Lemma \ref{lem:convex} and the fact that optimal solution over a closed interval for a convex function is an end point. 
\end{proof}
\begin{remark}
Since the linear approximation would perform better when $\frac{\mathcal P_i}{\mathcal N_0 + \sum_{j \in \bar{\Delta}^r_i \cap \bar{\Delta}^t_i} S_j(\mathbf{\mathcal P, \mathbf{p}})\cdot \mathcal P_{j}\cdot h_{ji}}$ is small, we know when the network size is large (i.e., $N$ is large) and dense, the upper bound problem will provide a good approximation for the original problem. 
\end{remark}

\begin{remark}
By finding the optimal solution for problems \textbf{(PL)} and \textbf{(PU)}, we have the bounds for the optimal solutions. 
\begin{align}
\mathcal U_L( \mathcal P^*_L) \leq \mathcal U(\mathcal P^*) \leq \mathcal U_U( \mathcal P^*_U)
\end{align}
Meanwhile we can use $\mathcal P^*_L$ and $\mathcal P^*_U$ as approximate strategies for our original problem \textbf{(P)} with 
\begin{align}
U(\mathcal P^*_L) \geq \mathcal U_L(\mathcal P^*_L), ~ U(\mathcal P^*_U)  \leq \mathcal U_U(\mathcal P^*_U)
\end{align} 
In next section we will focus on solving \textbf{(PL)} and \textbf{(PU)} instead of \textbf{(P)}.
\end{remark}

\subsection{Greedy search}

In solving \textbf{(PL)} and \textbf{(PU)} instead of the original \textbf{(P)}, the problem reduces to searching over a finite strategy space which can be done within a finite number of steps dependent on the size of the network. 
For the lower bound problem \textbf{(PL)}, the strategy space is on the order of $\mathcal O(N^N)$, while for the upper bound problem \textbf{(PU)} the order is $\mathcal O((2N-1)^N)$. However with a large scale WLAN network (referring to the number of APs in the network), these could still be excessively large even though finite. This is the classical \emph{rollout} problem in combinatorial optimization.  Below we present a heuristic greedy approach, which is shown later through numerical experiment to provide a near-optimal solution efficiently. The basic idea of a greedy search method is to maximize the system's total throughput w.r.t. a single variable at each stage of the computation while keeping the others fixed. 
The details of this approach is shown in Algorithm \ref{alg-greedy}, where the notation $\mathcal P_{-m}$ denotes the power profile for all but AP $m$, \rev{and the objective function ${\mathcal U}()$ denote either $\mathcal U_U$ or $\mathcal U_L$ depending on which problem (\textbf{(PU)} vs. \textbf{(PL)}) we are trying to solve.}  

\begin{algorithm}
\rule{5in}{1.5pt}
\caption{Pseudocode for Greedy Search }
\begin{algorithmic}
\STATE Set $\mathcal P_i(0) = \overline{\mathcal P}_i, \forall i \in \Omega$.\\
\STATE $\text{temp} = \mathcal U(\mathcal P(0))$, $\epsilon = \text{temp}$, $n=0$. \\
\While {$\epsilon > 0$}{ 
\STATE $n:=n+1$;
\STATE $m:=n \mod N$; 
\STATE $\mathcal P_m(n) = \argmax_{\mathcal P^m \in \tilde{\mathcal Q}_m} \mathcal U(\mathcal P^m, \mathcal P_{-m}(n-1))$; 
\FOR{$j = 1:N$}
\STATE if $j \neq m$ \\
\STATE ~~~$\mathcal P_j(n) = \mathcal P_j(n-1)$; \\
\ENDFOR
\STATE $\epsilon = |\mathcal U(\mathcal P(n)) - \text{temp}|$; \\
\STATE $\text{temp} = \mathcal U(\mathcal P(n))$;\\
}
\end{algorithmic}
\label{alg-greedy}
\rule{5in}{1.5pt}
\end{algorithm}

\begin{lemma}
The above greedy search terminates within a finite number of steps and reaches a local optimal solution to \textbf{(PL)} and \textbf{(PU)}. 
\end{lemma}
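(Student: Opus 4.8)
The plan is to recognize Algorithm~\ref{alg-greedy} as a cyclic coordinate-ascent scheme over the finite set $\tilde{\mathcal Q}$, argue termination from monotonicity together with finiteness, and then identify the terminal iterate as a coordinate-wise local optimum. First I would verify that every iterate stays in $\tilde{\mathcal Q}$: the initialization $\mathcal P_i(0)=\overline{\mathcal P}_i$ lies in $\tilde{\mathcal Q}_i$ because $\overline{\mathcal P}_i = \mathcal P^{+}_{ii}$, and each pass replaces the single coordinate $\mathcal P_m$ by an element of $\tilde{\mathcal Q}_m$ while leaving the others unchanged; by induction $\mathcal P(n)\in\tilde{\mathcal Q}$ for all $n$. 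Since the model section already establishes $|\tilde{\mathcal Q}|<\infty$, the search lives entirely in a finite state space, and for both $\mathcal U=\mathcal U_L$ and $\mathcal U=\mathcal U_U$ the per-coordinate $\argmax$ is well defined.

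Next I would establish monotonicity of the objective. Because the incumbent value $\mathcal P_m(n-1)\in\tilde{\mathcal Q}_m$ is itself one of the candidates in $\argmax_{\mathcal P^m\in\tilde{\mathcal Q}_m}\mathcal U(\mathcal P^m,\mathcal P_{-m}(n-1))$, the chosen value can never be worse than the incumbent, so
\begin{align*}
\mathcal U(\mathcal P(n)) = \max_{\mathcal P^m\in\tilde{\mathcal Q}_m}\mathcal U(\mathcal P^m,\mathcal P_{-m}(n-1)) \geq \mathcal U(\mathcal P(n-1)),
\end{align*}
i.e.\ the sequence $\{\mathcal U(\mathcal P(n))\}$ is non-decreasing. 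This argument uses only that the incumbent is a feasible candidate, so it is agnostic to which of the two objectives is being optimized.

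For finite termination I would combine the two facts. The objective takes values in the finite image set $\{\mathcal U(\mathcal P):\mathcal P\in\tilde{\mathcal Q}\}$, and a non-decreasing sequence in a finite set can strictly increase only finitely many times. The while-loop continues only while $\epsilon=|\mathcal U(\mathcal P(n))-\mathcal U(\mathcal P(n-1))|>0$, which by monotonicity means a strict increase; hence the algorithm performs at most finitely many strict-increase steps before some coordinate update yields $\epsilon=0$, at which point the loop halts.

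It remains to characterize the halting point as a local optimum, and this is where I expect the main difficulty. As written, the loop stops the first time a \emph{single} coordinate update produces no improvement, which by itself only certifies that the halting profile is optimal in that one coordinate. To conclude a genuine coordinate-wise local optimum --- a profile $\mathcal P^{\dagger}$ with $\mathcal P^{\dagger}_i\in\argmax_{\mathcal P^i\in\tilde{\mathcal Q}_i}\mathcal U(\mathcal P^i,\mathcal P^{\dagger}_{-i})$ for every $i$ --- I would argue over a full cycle of $N$ consecutive updates: once $\mathcal U$ can no longer strictly increase, every coordinate update leaves the objective fixed, so no unilateral change within any $\tilde{\mathcal Q}_i$ improves $\mathcal U$, which is exactly the fixed-point (local-optimality) condition. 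The cleanest route is to read the stopping test as applied to a complete sweep of all $N$ coordinates (equivalently, to track the objective change accumulated over one cycle rather than over one coordinate), under which the halting profile is provably a fixed point of coordinate maximization. Reconciling the per-coordinate $\epsilon$ in the pseudocode with this sweep-level criterion is the one point I would treat with care, as it is the only gap between the stated stopping rule and the claimed local-optimality guarantee.
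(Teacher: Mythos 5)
Your argument is correct and follows essentially the same route as the paper's: monotone improvement of the objective over the finite set $\tilde{\mathcal Q}$ (the paper phrases this as a no-cycle argument --- a repeated profile would force $\mathcal U(\mathcal P(t_1))=\mathcal U(\mathcal P(t_2))$, contradicting strict improvement --- while you invoke the equivalent fact that a non-decreasing sequence in a finite image set strictly increases only finitely often) yields finite termination. Your additional observation, that the per-coordinate stopping test $\epsilon=0$ only certifies optimality in the last-updated coordinate and that the claimed coordinate-wise local optimality really requires reading the test over a full sweep of all $N$ coordinates, identifies a genuine imprecision in the pseudocode that the paper's proof passes over silently, and your proposed sweep-level reading is the correct repair.
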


\subsection{Optimal search}
%

In this part, we present a randomized search algorithm that guarantees convergence to the optimal solution for \textbf{(PL)} and \textbf{(PU)}.
The algorithm works in rounds starting from AP $1$ and computes the power for one AP in each round.  Denote the state of the system at round $n$ as $\mathcal G(n)$ and $\mathcal G(n) := [\mathcal P_1(n),\mathcal P_2(n),...,\mathcal P_N(n)]$. Suppose at round $n$ AP $i$'s (i.e., $i = n \mod N$) power is being computed.  Then AP $i$'s next power level is updated using the following transition probability:  
\begin{align}
&\mathbb P (\mathcal G(n+1) = (\mathcal P^i,\mathcal G_{-i}(n))|\mathcal G(n)) \nonumber \\
&= \frac{e^{\mathcal U(\mathcal P^i, \mathcal G_{-i}(n))/\tau(n)}}{e^{\mathcal U(\mathcal P^i, \mathcal G_{-i}(n))/\tau(n)}+e^{\mathcal U(\mathcal G(n))/\tau(n)}}\cdot \frac{1}{\mathcal L_i}, \nonumber \\
&~\forall \mathcal P^i \in \tilde{\mathcal Q}_i, \mathcal P^i \neq \mathcal P_i(n).
\end{align}
with the probability of not changing the power level given as 
\begin{align}
&~~~~~~\mathbb P (\mathcal G(n+1) = \mathcal G(n)|\mathcal G(n)) \nonumber \\
&=\sum_{\mathcal P^i \in \tilde{\mathcal Q}_i, \mathcal P^i \neq \mathcal P_i(n)}\frac{e^{\mathcal U(\mathcal G(n))/\tau(n)}}{e^{\mathcal U(\mathcal P^i, \mathcal G_{-i}(n))/\tau(n)}+e^{\mathcal U(\mathcal G(n))/\tau(n)}}\cdot \frac{1}{\mathcal L_i}~. 
\end{align}
Here $\tau(n) := \frac{1}{n}$ is a positive smoothing factor and $\mathcal L_i$ is a normalization factor for user $i$. This search algorithm will be referred to as \textbf{P\_RAND}.
\rev{As before, the function ${\mathcal U}()$ in the above equation is the objective in problem \textbf{(PL)} (resp. \textbf{(PU)}) if the algorithm is used to search for an optimal solution to problem \text{(PL)} (resp. \text{(PU)}).} 

\begin{algorithm}
\rule{7in}{1.5pt}
\caption{Pseudocode for Randomized Search }
\begin{algorithmic}
\SetAlgoLined
\STATE Set $\mathcal P_i(0) = \overline{\mathcal P}_i, \forall i \in \Omega$.\\
\STATE $\text{temp} = \mathcal U(\mathcal P(0))$, $\epsilon = \text{temp}$, $n=0$. \\
\STATE $\text{thrs} = 10^{-4}$. (some small positive value)\\
\While {$\epsilon > \text{thrs}$}{ 
\STATE $n:=n+1$ ;
\STATE $m:=n \mod N$; 
\STATE Set $\mathcal P_m(n) = \mathcal P^m$  with probability $\frac{e^{\mathcal U(\mathcal P^m, \mathcal G_{-m}(n))/\tau(n)}}{e^{\mathcal U(\mathcal P^m, \mathcal G_{-m}(n))/\tau(n)}+e^{\mathcal U(\mathcal G(n))/\tau(n)}}\cdot \frac{1}{\mathcal L_m}, \forall \mathcal P^m \in \tilde{\mathcal Q}_m, \mathcal P^m \neq \mathcal P_m(n)$. 
\STATE While $\mathcal P_m(n)$ stays unchanged with probability $\sum_{\mathcal P^m \in \tilde{\mathcal Q}_i, \mathcal P^m \neq \mathcal  P_m(n)}\frac{e^{\mathcal U(\mathcal G(n))/\tau(n)}}{e^{\mathcal U(\mathcal P^m, \mathcal G_{-i}(n))/\tau(n)}+e^{\mathcal U(\mathcal G(n))/\tau(n)}}\cdot \frac{1}{\mathcal L_m}$.
\FOR{$j = 1:N$}
\STATE if $j \neq m$ \\
\STATE ~~~$\mathcal P_j(n) = \mathcal P_j(n-1)$; \\
\ENDFOR
\STATE $\epsilon = |\mathcal U(\mathcal P(n)) - \text{temp}|$; \\
\STATE $\text{temp} = \mathcal U(\mathcal P(n))$;\\
}
\end{algorithmic}
\label{alg:rand}
\rule{7in}{1.5pt}
\end{algorithm}

\begin{theorem}
\textbf{P\_RAND} converges to the optimal solution to the two approximate problems \textbf{(PL)} and \textbf{(PU)}.  
\end{theorem}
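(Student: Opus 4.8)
The plan is to recognize \textbf{P\_RAND} as a simulated-annealing (Gibbs-sampler) recursion on the \emph{finite} state space $\tilde{\mathcal Q}$, which is legitimate because the preceding lemmas guarantee that an optimal solution to both \textbf{(PL)} and \textbf{(PU)} lies in $\tilde{\mathcal Q}$, so it suffices to search this finite set. The argument would proceed in four steps: (i) identify the per-coordinate update as a reversible move with respect to the Boltzmann law $\pi_\tau(\mathcal G)\propto e^{\mathcal U(\mathcal G)/\tau}$; (ii) deduce that, at any fixed temperature $\tau$, the cyclic sweep over the $N$ coordinates is an irreducible aperiodic Markov chain whose unique stationary distribution is $\pi_\tau$; (iii) show $\pi_\tau$ concentrates on the set of global maximizers as $\tau\to0$; and (iv) combine these with the cooling schedule $\tau(n)=1/n$ to conclude convergence of the time-inhomogeneous chain to a distribution supported on the optimizers.

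For step (i), I would note that the transition rule has the \emph{Barker} form: conditioned on updating coordinate $i$, a candidate value $\mathcal P^i\neq\mathcal P_i(n)$ is proposed uniformly (the factor $1/\mathcal L_i$) and accepted with probability $e^{\mathcal U(\mathcal P^i,\mathcal G_{-i})/\tau}/(e^{\mathcal U(\mathcal P^i,\mathcal G_{-i})/\tau}+e^{\mathcal U(\mathcal G)/\tau})$. Writing $\pi_\tau(\mathcal G)\propto e^{\mathcal U(\mathcal G)/\tau}$ and checking detailed balance for a single coordinate, the proposal is symmetric and the acceptance ratio equals $\pi_\tau(y)/(\pi_\tau(x)+\pi_\tau(y))$, so the single-coordinate kernel is reversible with respect to $\pi_\tau$ and hence preserves it. For step (ii), since every off-diagonal candidate receives strictly positive probability and the state space is finite, a full sweep connects any two profiles and leaves a positive holding probability, giving irreducibility and aperiodicity; as each coordinate kernel preserves $\pi_\tau$, so does their composition over one sweep, and the ergodic theorem for finite Markov chains yields convergence to $\pi_\tau$ at fixed $\tau$.

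Step (iii) is the elementary concentration computation: dividing numerator and denominator of $\pi_\tau(\mathcal G)=e^{\mathcal U(\mathcal G)/\tau}/\sum_{\mathcal G'}e^{\mathcal U(\mathcal G')/\tau}$ by $e^{\mathcal U^{*}/\tau}$, with $\mathcal U^{*}:=\max_{\mathcal G\in\tilde{\mathcal Q}}\mathcal U(\mathcal G)$, shows that $\pi_\tau(\mathcal G)\to 1/|\argmax \mathcal U|$ for maximizers and $\pi_\tau(\mathcal G)\to 0$ otherwise as $\tau\to0$. Thus the fixed-temperature limit already isolates the optimal set; what remains is to let temperature and time decrease together.

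The delicate part --- and the step I expect to be the main obstacle --- is step (iv): the chain is \emph{time-inhomogeneous}, so fixed-$\tau$ ergodicity and the $\tau\to0$ limit cannot simply be concatenated. The rigorous route is to invoke a weak/strong ergodicity theorem for annealing chains (e.g. a Hajek-type criterion), which guarantees convergence to the uniform law on global maximizers provided the temperature decreases no faster than $d^{*}/\log n$, where $d^{*}$ is the largest ``depth'' of a non-global local maximum of $\mathcal U$ on $\tilde{\mathcal Q}$. Since the stated schedule $\tau(n)=1/n$ decays faster than the logarithmic rate, this is exactly where care is needed: one must either argue that the landscape of $\mathcal U_L$ (resp. $\mathcal U_U$) over $\tilde{\mathcal Q}$ has no strict non-global local maxima (so $d^{*}=0$ and any cooling suffices), or weaken the claimed schedule to a logarithmic one. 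I would therefore spend most of the effort on verifying the ergodicity hypotheses for the specific objectives $\mathcal U_L$ and $\mathcal U_U$, and would flag the cooling schedule as the point requiring the strongest justification.
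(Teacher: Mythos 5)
Your plan is essentially the paper's own argument: the authors likewise verify detailed balance for the Boltzmann law $\pi(\mathcal G)\propto e^{\mathcal U(\mathcal G)/\tau(n)}$ over the finite space $\tilde{\mathcal Q}$, note that transitions have positive probability only between profiles differing in one coordinate, show $\pi(\mathcal G')\to 0$ for any non-maximizer as $n\to\infty$, and then conclude concentration on the global maximizers by citing a 1987 annealing reference. Where you go beyond the paper is precisely your step (iv), and your worry there is legitimate rather than excessive caution: the paper performs exactly the concatenation you warn against, treating the instantaneous Gibbs measure at temperature $\tau(n)$ as if the time-inhomogeneous chain tracked it, and it never addresses weak ergodicity or the admissibility of the cooling schedule. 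Since $\tau(n)=1/n$ decays far faster than the $c/\log n$ rate required by Hajek-type criteria (the relevant sum $\sum_n e^{-d^{*}/\tau(n)}=\sum_n e^{-d^{*}n}$ converges whenever the maximal depth $d^{*}$ of a non-global local maximum is positive), the convergence claim as stated needs either a proof that $\mathcal U_L$ and $\mathcal U_U$ have no strict non-global local maxima on $\tilde{\mathcal Q}$ --- which the paper does not attempt, and which its own illustrations of non-convexity make doubtful --- or a slower, logarithmic schedule. So your proposal is not merely correct in outline; it is more careful than the published proof, and completing your step (iv) would amount to repairing the step the paper delegates to a citation.
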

\begin{proof}
Due to the finiteness of the strategy spaces of all APs, we can form an $N$-dimensional positive recurrent Markov chain, \rev{with state at time $n$ given by $\mathcal G(n)$}; there exists a stationary distribution of this Markov chain.  

We next show that the following probability distribution $\pi$ over the state space $\tilde{\mathcal Q}$ is the stationary distribution of this Markov Chain.
\begin{align}
\pi(\mathcal G) = T\cdot e^{\mathcal U(\mathcal G)/\tau(n)}, \forall \mathcal G \in \tilde{\mathcal Q}~, 
\end{align}
where $T$ is the normalization constant. As $\sum_{\mathcal G \in \tilde{\mathcal Q}} \pi(\mathcal G) = 1$ the above is equivalent to 
$
T = \sum_{\mathcal G \in \tilde{\mathcal Q}}e^{\mathcal U(\mathcal G)/\tau(n)}~, 
$
and thus 
\begin{align}
\pi(\mathcal G) = \frac{e^{\mathcal U(\mathcal G)/\tau(n)}}{\sum_{\mathcal G \in \tilde{\mathcal Q}}e^{\mathcal U(\mathcal G)/\tau(n)}}, \forall \mathcal G \in \tilde{\mathcal Q}~,
\end{align}
To prove this, consider the \rev{detailed} balance equations. Specifically consider two states $\mathcal G_1$ and $\mathcal G_2$. Note that only when $\mathcal G_1$ and $\mathcal G_2$ differ in one element is the transition probability is positive; otherwise the transition probability is zero. We would like to show the following 
\begin{align}
\pi(\mathcal G_1)\cdot \mathbb P(\mathcal G_2|\mathcal G_1) = \pi(\mathcal G_2)\cdot \mathbb P(\mathcal G_1|\mathcal G_2),\forall \mathcal G_1,\mathcal G_2. \label{dbe}
\end{align}
When there is only one element difference between the two states we know
\begin{align}
\mathbb P(\mathcal G_2|\mathcal G_1) = \mathcal L\cdot \frac{e^{\mathcal U(\mathcal G_2/\tau(n))}}{e^{\mathcal U(\mathcal G_2/\tau(n))}+e^{\mathcal U(\mathcal G_1/\tau(n))}} \\
\mathbb P(\mathcal G_1|\mathcal G_2) = \mathcal L\cdot \frac{e^{\mathcal U(\mathcal G_1/\tau(n))}}{e^{\mathcal U(\mathcal G_1/\tau(n))}+e^{\mathcal U(\mathcal G_2/\tau(n))}}
 \end{align} 
 with $\mathcal L$ being some constant. Therefore (\ref{dbe}) holds readily. It follows from standard result that $\pi$ is indeed the stationary distribution. 

\rev{Denote $\mathcal G^* = \{\mathcal P^*\}$ as the set of global maximizers; i.e., the set of power profiles that maximizes our objectives in \textbf{(PL)} and \textbf{(PU)} respectively. Suppose there is a state $\mathcal G^{'} \notin \mathcal G^*$. We have 
\begin{align}
\pi(\mathcal G^{'}) =&  \frac{e^{\mathcal U(\mathcal G^{'})/\tau(n)}}{\sum_{\mathcal G \in \tilde{\mathcal Q}}e^{\mathcal U(\mathcal G)/\tau(n)}}= \frac{1}{\sum_{\mathcal G \in \tilde{\mathcal Q}}e^{(\mathcal U(\mathcal G)-\mathcal U(\mathcal G^{'}))/\tau(n)}} 
\end{align}
For $\mathcal G \in \mathcal G^*$ we have $\mathcal U(\mathcal G)-\mathcal U(\mathcal G^{'} )> 0$ and
\begin{align}
\lim_{n \rightarrow \infty}(\mathcal U(\mathcal G)-\mathcal U(\mathcal G^{'}))/\tau(n) \rightarrow \infty
\end{align}
and hence we have
$
\lim_{n \rightarrow \infty} \pi(\mathcal G^{'}) = 0
$
; furthermore by \cite{1987} we establish the following
\begin{align}
\lim_{n \rightarrow \infty} \pi(\mathcal G \in \mathcal G^*) = 1
\end{align}
i.e., the Markovian chain converges to the maximization states with probability 1. 
}
\end{proof}

\section{Solution Approach : A decentralized View} 
\subsection{A game theoretic view}
Starting from this section we analyze a strategic network for our problem. We first form a game theoretical model. Instead of group of users sharing a common goal, we assume the network users being strategic, i.e. with each one being rational. Define $\Delta \mathcal Q := \Delta \mathcal Q_1 \times \Delta \mathcal Q_2 \times ... \Delta \mathcal Q_N$. Here $\Delta \mathcal Q_i$ is the probability measure over action space $Q_i$. Then $(\Omega, \Delta \mathcal Q_i, \mathcal U_i)$ defines our game. We call this game (G1). We investigate the Nash Equilibrium (N.E.) of above game. To be specific, we show there is one unique N.E. exists for above game.
\begin{proposition}
The only N.E. for (G1) is $\mathbf{\overline{\mathcal P}} = [\overline{\mathcal P}_1,\overline{\mathcal P}_2,...,\overline{\mathcal P}_N]$.
\end{proposition}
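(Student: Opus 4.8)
The plan is to show that transmitting at maximum power is a strictly dominant strategy for every user, from which both the existence and the uniqueness of the N.E. at $\overline{\mathcal P}$ follow at once. Fix an arbitrary user $i$ and an arbitrary (possibly mixed) profile $\mathcal P_{-i}$ of the other users, and study how $\mathcal U_i = \mathcal S_i\cdot \mathcal C_i$ varies with $i$'s own power $\mathcal P_i$. First I would observe that the sharing factor $\mathcal S_i = (1-p_c)^{n_i(\mathcal P)}p_c$ is a positive constant in $\mathcal P_i$: since $n_i(\mathcal P) = |\Delta^r_i|$ and membership in $\Delta^r_i$ is governed by the conditions $\mathcal P_j h_{ji}\geq \mathcal P^i_{cs}$, which involve only the powers of the \emph{other} users, $i$'s own power never enters $\mathcal S_i$. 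Hence maximizing $\mathcal U_i$ over $\mathcal P_i$ is equivalent to maximizing the capacity term $\mathcal C_i$ alone.

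Next I would prove that $\mathcal C_i$ is strictly increasing in $\mathcal P_i$. The numerator $\mathcal P_i$ is strictly increasing. For the denominator $\mathcal N_0 + \sum_{j\in \bar\Delta^r_i\cap \bar\Delta^t_i}\mathcal S_j \mathcal P_j h_{ji}$, note that $\bar\Delta^r_i$ does not depend on $\mathcal P_i$, while $\Delta^t_i = \{j: \mathcal P_i h_{ij}\geq \mathcal P^j_{cs}\}$ is non-decreasing as a set in $\mathcal P_i$, so the index set $\bar\Delta^r_i\cap \bar\Delta^t_i$ can only shrink as $\mathcal P_i$ grows. The point to verify carefully is that each surviving summand $\mathcal S_j \mathcal P_j h_{ji}$ is itself constant in $\mathcal P_i$: for any $j$ in the interference set we have $j\notin \Delta^t_i$, equivalently $i\notin \Delta^r_j$, so $n_j$ — and therefore $\mathcal S_j$ — does not count $i$ and is unaffected by $\mathcal P_i$ over the range for which $j$ remains in the set. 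Consequently the denominator is a non-increasing step function of $\mathcal P_i$, constant between neighbor-reaching thresholds and dropping by a nonnegative amount whenever some $j$ leaves $\bar\Delta^t_i$. A strictly increasing numerator over a non-increasing positive denominator yields a strictly increasing ratio, and since $\log(1+\cdot)$ is strictly increasing, $\mathcal C_i$ is strictly increasing in $\mathcal P_i$.

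Finally I would conclude. Because $\mathcal U_i(\mathcal P_i, \mathcal P_{-i})$ is strictly increasing in $\mathcal P_i$ for every fixed $\mathcal P_{-i}$, its unique maximizer over the closed set $\mathcal Q_i$ is the endpoint $\overline{\mathcal P}_i$, independent of the others' choices; that is, $\overline{\mathcal P}_i$ strictly dominates every other pure action. Since a strictly dominant pure action is also the unique best response against any distribution over $\mathcal P_{-i}$ (the expected payoff of any strategy placing positive mass off $\overline{\mathcal P}_i$ is strictly smaller), the point mass at $\overline{\mathcal P}_i$ is the unique best response in $\Delta\mathcal Q_i$. Therefore $\overline{\mathcal P} = [\overline{\mathcal P}_1,\ldots,\overline{\mathcal P}_N]$ is a N.E., and no other profile can be, giving uniqueness.

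I expect the main obstacle to be the middle step — in particular the bookkeeping that the interference coefficients $\mathcal S_j$ of the terms remaining in the denominator are genuinely unaffected by $\mathcal P_i$, so that the denominator is truly monotone in $\mathcal P_i$ and cannot increase through a change in some $n_j$. Once strict monotonicity, and hence strict dominance, is established, the equilibrium characterization and its uniqueness are routine.
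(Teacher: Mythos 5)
Your proposal is correct and follows essentially the same route as the paper: $\mathcal S_i$ does not depend on $\mathcal P_i$, $\mathcal C_i$ is strictly increasing in $\mathcal P_i$, hence $\overline{\mathcal P}_i$ is a strictly dominant action and uniqueness follows. You additionally supply the bookkeeping (the interference set $\bar\Delta^r_i\cap\bar\Delta^t_i$ can only shrink in $\mathcal P_i$ and the surviving $\mathcal S_j$ terms are unaffected) that the paper merely asserts when claiming monotonicity of $\mathcal C_i$, which is a welcome but not substantively different addition.
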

\begin{proof} First we show $\mathbf{\overline{\mathcal P}}$ is a N.E.. Without of losing generality, consider AP $i$. As $\mathcal S_i(\mathbf{\mathcal P},\mathbf{p})$ is independent with $\mathcal P_i$ and $\mathcal C_i$ is a strictly increasing function w.r.t. $\mathcal P_i$, by unilaterally deviating from $\overline{\mathcal P}_i$ to some lower power profile $\mathcal P^{'}_i$, $\mathcal U_i$ will decrease. Therefore we proved $\mathbf{\overline{\mathcal P}}$ is a N.E.. From above arguments we also established the uniqueness immediately.

\end{proof}

\subsection{Mechanism design with perfect monitoring}
Starting from this part we solve our problem with mechanism design's approach. First of all since we are more interested in a dense network with well interacted APs we make the following assumption: for each user $i$, there is at least one $j$ such that
\begin{align}
\overline{\mathcal P}_j\cdot h_{ji} \geq \mathcal P^i_{cs}
\end{align}
i.e., for any AP it can be reached by at least one another AP within its carrier sensing range. Consider the case where there is such AP that nobody can reach him and let's call it a stand-alone AP. This AP can be thought of disconnected from the network and it cannot be controlled by the other APs' behaviors; or in a homogeneous network, this stand-alone AP will not contribute to other people's utility in a significant way and thus we could screen this case out and focus on the rest well connected APs.


We first start with the mechanism design problem with \emph{perfect monitoring}; here by \emph{perfect monitoring} we mean each user can monitor other users' transmission power in a perfect way without noise. As we discussed in last section, the socially optimal power allocation strategy profile may not be a NE of the multiuser system. Meanwhile, it is easy to check each user's utility function is a strictly increasing function w.r.t. the attempt rate $p$. Therefore in a decentralized system, each user has two incentives to deviate from our socially optimal strategy profile:
\begin{itemize}
\item[\RNum{1}.] Deviate from pre-specified transmission power to get a higher SNR.
\item[\RNum{2}.] Deviate from $p_c$ to get more air time.  
\end{itemize}
In this section we consider the optimization problem from a long run perspective for each user, i.e., we consider the following programming problem
$$
\begin{array}{ccll}
(P^*_i)&\max & \sum_{t=0}^{\infty} \delta^t\cdot \mathcal U_i(\mathbf{\mathcal P}(t), \mathbf{p}(t)) \\
&\st& \mathcal P_i(t) \in \mathcal Q_i, \forall i \in \Omega, t=0,1,2,...\end{array}
$$
Here $\delta$ is the discount factor satisfying $0 < \delta < 1$ and it models users or system's patience. Denote the social optimal power setting as $\mathbf{(\mathcal P^*,p)}$. We define the following states of APs:
\begin{myframe}
\begin{itemize}
\item[S.1.] $S_0$: The initial state; at $S_0$ the group of APs follow the strategy profile $\mathbf{(\mathcal P^*,p)}$. 
\item[S.2.] $S_i(t), t=1,2,...,L, i \in \Omega$: Punishment phases for user $i$. At state $S_i(t)$ APs follow the specified strategy profile  $\mathbf{(\overline{\mathcal P}},(\tilde{p}_i, [\tilde{p}_{ji}]_{j \neq i}))$. Here $L$ is some finite positive integer which is the length of punishment phases. 
\end{itemize}
\end{myframe}
Consider the following mechanism.
\begin{myframe}
\begin{itemize}
\item[M.1.] APs start at state $S_0$. If all APs follow the strategy profile $\mathbf{(\mathcal P^*,p)}$ specified for $S_0$, at next time APs will keep playing the strategy profile. 

\item[M.2.] If there is one AP, for example $i$ deviated from $\mathbf{(\mathcal P^*,p)}$ at time $t$, starting from time $t+1$, system goes into state $S_i(1)$ and play strategy profile $\mathbf{(\overline{\mathcal P}},(\tilde{p}_i, [\tilde{p}_{ji}]_{j \neq i})$. 
\item[M.3.] At state $S_i(t), t=1,2,...,L-1$, if all APs follow the specified strategy, system goes to state $S_i(t+1)$ at next time point; at $S_i(L)$, if all APs follow the specified strategy, system goes to state $S_0$. 

\item[M.4.] At state $S_i(t), t=1,2,...,L$, if AP $i$ deviates from the specified strategy profile, system goes to state $S_i(1)$ at next time. If AP $j \neq i$ deviates, system goes to $S_j(1)$. 
\end{itemize}
\end{myframe}
We refer this mechanism as (MPM). Now we present some requirements on choosing over $([\tilde{p}_i]_{i \in \Omega},[\tilde{p}_{j,k}]_{j \neq k})$.
\begin{myframe}
\begin{itemize}
\item[A.1.] $[\tilde{p}_{ji}]_{j \neq i}$ should be large enough to deter user $i$. Notice that with large enough $[\tilde{p}_{ji}]_{j \neq i}$ (with each elements close enough to 1) we have
\begin{align}
\mathcal S_i(\overline{\mathcal P}, \tilde{p}_i,[\tilde{p}_{ji}]_{j \neq i}) \rightarrow 0, \mathcal U_i(\overline{\mathcal P}, \tilde{p}_i,[\tilde{p}_{ji}]_{j \neq i}) \rightarrow 0
\end{align}
This essentially follows the intuition that when some user else becomes excessively aggressive, AP $i$ will lose most of its airtime (throttled by other users) and thus results a significant decrease of its achievable throughput. Therefore with appropriately chosen parameter set $\tilde{\mathbf{p}}$ we can get 
\begin{align}
\mathcal U_i\mathbf{(\overline{\mathcal P}},\tilde{p}_i,[\tilde{p}_{ji}]_{j \neq i})< \mathcal U_i\mathbf{(\mathcal P^*,p)}
\end{align}

\item[A.2.] A second restriction over selecting $([\tilde{p}_i]_{i \in \Omega},[\tilde{p}_{j,k}]_{j \neq k})$ is that each AP $i$ would like to stay at other users' punishment phase instead of its own, i.e., 
\begin{align}
\mathcal U_i\mathbf{(\overline{\mathcal P}},\tilde{p}_j,[\tilde{p}_{kj}]_{k \neq j}) \geq \mathcal U_i\mathbf{(\overline{\mathcal P}},\tilde{p}_i, [\tilde{p}_{ki}]_{k \neq i})
\end{align}
Through simple algebra we can show the existence of such strategy profiles w.r.t. APs' attempt rate. The details are thus omitted here.

\item[A.3.] A third restriction on selecting $([\tilde{p}_i]_{i \in \Omega},[\tilde{p}_{j,k}]_{j \neq k})$ we put is as following.
\begin{align}
\mathcal U_i(\overline{\mathcal P}, &  p_i = 1, [\tilde{p}_{ji}]_{j \neq i})-\mathcal U_i(\overline{\mathcal P}, \tilde{p}_i, [\tilde{p}_{ji}]_{j \neq i}) \nonumber \\
&< \mathcal U_i(\overline{\mathcal P},\mathbf{p}) - \mathcal U_i(\overline{\mathcal P}, \tilde{p}_i, [\tilde{p}_{ji}]_{j \neq i})
\end{align} 
i.e., $[\tilde{p}_i]_i$ is large enough. We will discuss the intuition later.  
\end{itemize}
\end{myframe}
\begin{remark}
It can be shown a trivial combination for  such $([\tilde{p}_i]_{i \in \Omega},[\tilde{p}_{j,k}]_{j \neq k})$ is setting each element to be 1, i.e., $([\tilde{p}_i]_{i \in \Omega},[\tilde{p}_{j,k}]_{j \neq k})  =\mathbf{ 1}$. Here $\mathbf{ 1}$ is the all-one vector with corresponding length. However we do not want these punishments to be too harsh and these three requirements help establish our mechanism while keep the punishments as light as possible.  
\end{remark}

Now we show with appropriately chosen parameter, (MPM) enforces the strategy profile $\boldsymbol{(\mathcal P^*,p)}$ for all APs.

\begin{theorem}
$\boldsymbol{(\mathcal P^*,p)}$ is enforceable under mechanism (MPM) with large enough $\delta$ and appropriately chosen $L$. 
\end{theorem}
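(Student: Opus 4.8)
The plan is to prove enforceability by showing that the strategy profile prescribed by (MPM) is a subgame-perfect equilibrium of the discounted repeated game $(P^*_i)$, and to do so via the \emph{one-shot deviation principle}. Since every $\mathcal Q_i$ is closed with finite $\overline{\mathcal P}_i$, and $p_c \in (0,1)$, each stage payoff $\mathcal U_i$ is bounded; hence the discounted game is continuous at infinity and the one-shot deviation principle applies. It therefore suffices to show that at each of the two types of states, $S_0$ and $S_i(t)$, no AP can strictly gain from a single-period deviation followed by a return to conformance.

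First I would treat the cooperative state $S_0$. Let $G_i := \sup \mathcal U_i - \mathcal U_i(\mathcal P^*,\mathbf p)$ denote the largest one-period gain AP $i$ can extract by a unilateral deviation (via higher power for SNR or higher attempt rate for air time); boundedness of $\mathcal U_i$ gives $G_i < \infty$. By M.2 such a deviation triggers the punishment phase $S_i(1),\dots,S_i(L)$, during which, by A.1, AP $i$'s per-period utility drops by $\Delta_i := \mathcal U_i(\mathcal P^*,\mathbf p) - \mathcal U_i(\overline{\mathcal P}, \tilde p_i, [\tilde p_{ji}]_{j\neq i}) > 0$; after $L$ periods both the deviation and the conforming paths return to $S_0$. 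Thus the deviation is unprofitable precisely when
\begin{align}
G_i \;\le\; \Delta_i \sum_{t=1}^{L} \delta^t \;=\; \Delta_i\,\frac{\delta(1-\delta^L)}{1-\delta}. \nonumber
\end{align}
For fixed $L$ the right-hand side tends to $L\,\Delta_i$ as $\delta \to 1$, so it suffices to first pick $L > \max_i G_i/\Delta_i$ and then choose $\delta$ close enough to $1$.

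Next I would verify incentive compatibility \emph{inside} the punishment phases $S_i(t)$, which is the delicate part. For the punished AP $i$, a deviation resets the system to $S_i(1)$ by M.4, merely prolonging its own punishment; assumption A.3 bounds the best one-period gain it could obtain (by pushing its attempt rate to $p_i=1$) below the cooperative level, so the reset outweighs any momentary gain once $\delta$ is large. For an enforcing AP $j \neq i$, deviating from the prescribed aggressive profile sends the system to $S_j(1)$, i.e., turns $j$ into the punished party; by A.2, AP $j$ weakly prefers remaining in $i$'s punishment phase to entering its own, so its continuation value strictly falls, and with bounded one-period gain this deviation is again unprofitable for $\delta$ near $1$. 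This is precisely the ``who punishes the punisher'' issue, resolved here by A.2 together with M.4.

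Finally I would collect the finitely many state-and-player constraints above: each yields a threshold on $\delta$ (given the already-fixed $L$), and taking the maximum over the finitely many APs and states produces a single $\delta^\star < 1$ such that for all $\delta \in (\delta^\star,1)$ no profitable one-shot deviation exists at any state, whence by the one-shot deviation principle $(\mathcal P^*,\mathbf p)$ is enforceable. The main obstacle I anticipate is not the deterrence at $S_0$ (a routine discounting argument) but making the two punishment-phase constraints hold \emph{simultaneously} under a common $(L,\delta)$: assumptions A.1--A.3 are tailored to guarantee the right ordering of continuation values, and the crux of the write-up is translating each of them cleanly into a one-shot-deviation comparison while keeping the punishment as light as possible.
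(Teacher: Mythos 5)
Your proposal is correct and follows essentially the same route as the paper's proof: both invoke the one-shot (one-step) deviation principle, deter deviation at $S_0$ via the geometric sum $\frac{\delta-\delta^{L+1}}{1-\delta}\to L$ together with A.1, and handle the punishment states by using A.3 for the punished AP and A.2 for the enforcers. Your write-up is in fact slightly more careful than the paper's in two respects --- you justify why the one-shot deviation principle applies (boundedness of stage payoffs) and you explicitly collect the finitely many thresholds into a single uniform $(L,\delta^\star)$ --- but these are refinements of, not departures from, the paper's argument.
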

\begin{proof} To prove the enforceability of $\boldsymbol{(\mathcal P^*,p)}$ we need to check no AP would like to go for a one step deviation at any state of (MPM) (according to the one step deviation principle ). 

First check state $S_0$. Consider an arbitrary AP $i$. Denote the utility from following the social optimal strategy profile for AP $i$ as $\mathcal U^*_i$. Then by following the specified strategy at $S_0$, the long run accumulated utility for AP $i$ is given by 
\begin{align}
\mathcal U_i(S_0) &= \delta^0\cdot \mathcal U^*_i  + ...+\delta^t\cdot \mathcal U^*_i+...= \frac{1}{1-\delta}\mathcal U^*_i
\end{align}
Next check the utility for AP $i$ by deviating to another profile $(\mathcal P^{'}_i, p^{'}_i)$. Remember due to the finiteness of $\mathcal Q_i$ and $p$ ($0 < p < 1$), we have some finite positive number $M_i$ such that
\begin{align}
\mathcal U_i((\mathcal P^{'}_i, p^{'}_i),\mathbf{(\mathcal P, p)}_{-i} ) \leq M_i
\end{align}
Also denote the utility of AP $i$ of adopting $\mathbf{(\overline{\mathcal P}},\tilde{p}_i, [\tilde{p}_{ji}]_{j \neq i})$ at punishment phase $S_i(t)$ as $\mathcal U^p_i$.  Then by deviation we know the aggregated utility for AP $i$ is 
\begin{align}
\mathcal U^{d}_i(S_0) \leq &M_i + \delta\cdot \mathcal U^p_i+ \delta^2 \cdot \mathcal U^p_i + ... \nonumber \\
&+ \delta^L \cdot \mathcal U^p_i + \delta^{L+1}\cdot \mathcal U^*_i + ...
\end{align}
Then we have
\begin{align}
\mathcal U_i(S_0) - \mathcal U^{d}_i(S_0) \geq ( \mathcal U^*_i-M_i) +(\mathcal U^*_i-\mathcal U^p_i )\cdot \frac{\delta-\delta^{L+1}}{1-\delta} 
\end{align}
As $\mathcal U^*_i-M_i \leq 0, \mathcal U^*_i-\mathcal U^p_i  > 0$, with a large enough $\delta$ and $L$ we could have 
\begin{align}
 \frac{\delta-\delta^{L+1}}{1-\delta}  \rightarrow L, ~(\mathcal U^*_i-\mathcal U^p_i) \cdot L &\geq M_i - \mathcal U^*_i
\end{align}
Denote a pair of suitable $(\delta, L)$ as $(\delta^1, L^1)$ and we proved with  $(\delta^1, L^1)$ there will be no profitable one step deviation for any AP at state $S_0$. 

Next we analyze the punishment phases. Consider $S_i(t)$. First notice that obviously all APs at any punishment phase with any stage $S_j(t), \forall j \in \Omega, t=1,2,...,L$ there is no incentive to deviate their power strategy. This follows from the observation each AP $i$'s utility $\mathcal U_i$ is strictly increasing w.r.t. $\mathcal P_i$. Thus no user would deviate from their maximum transmission power as used in the specified strategy profile at $S_j(t)$. Therefore we only need to consider APs deviating with attempt rate $\mathbf{p}$. 

First consider AP $i$. Denote $\mathcal U_i(S_i)$ as the utility AP $i$ can get by following $S_i$. By increasing its attempt rate, AP $i$ will increase its utility. But again due to the finiteness of $p_i$ we have a upper bound for this one step increase and we denote it as $M^{'}_i$. Then suppose at stage $t$, AP $i$'s utility by following specified strategy is given by 
\begin{align} 
\mathcal U_i(S_i(t)) = & \mathcal U_i(S_i) + \delta \cdot \mathcal U_i(S_i) + ... \nonumber \\
&+ \delta^{L-t-1}\mathcal U_i(S_i) + \delta^{L-t}\mathcal U_i(S_0) + ... 
\end{align}
and by deviating we have 
\begin{align}
\mathcal U^d_i(S_i(t)) \leq &M^{'}_i + \delta \cdot \mathcal U_i(S_i) + ... \nonumber \\
&+ \delta^{L}\mathcal U_i(S_i) + \delta^{L+1}\mathcal U_i(S_0) + ... 
\end{align}
Then take the difference we have
\begin{align}
&\mathcal U_i(S_i(t)) - \mathcal U^d_i(S_i(t)) \nonumber \\
&\geq \mathcal U_i(S_i) - M^{'}_i + \delta\cdot (\mathcal U_i(\mathcal P)-\mathcal U_i(S_i)) 
\end{align}
As $0 < M^{'}_i  - U_i(S_i)  < \mathcal U_i(\mathcal P)-\mathcal U_i(S_i)$ and with a large enough $\delta$ we have $\mathcal U_i(S_i) - M^{'}_i + \delta\cdot (\mathcal U_i(\mathcal P)-\mathcal U_i(S_i)) >0$. 

The last step is to check whether AP $j \neq i$ would deviate at state $S_i(t)$ or not. As we already put constraints on $([\tilde{p}_i]_{i \in \Omega},[\tilde{p}_{j,k}]_{j \neq k})$ such that each player would rather stay at other AP's punishment phase instead of their own, hereby AP $j$ would not deviate. 
\end{proof}

\subsection{Mechanism design with imperfect monitoring}
In this subsection we analyze the problem that each user has \emph{imperfect monitoring} over other users' transmission power at each decision period. In practice, the monitoring over other APs' deviation cannot be done in real time dues to multiple sources of noises, e.g., thermal noise. Therefore without a central monitor, each user needs to precisely detect a deviation by its own observations.

To be specific we consider the following noise model :  Instead of being $\mathcal P_j\cdot h_{ji}$, the received power at user $i$ (transmitted from user $j$) is given by
\begin{align}
\mathcal P_j\cdot h_{ji} + \mathcal N_{i}
\end{align} 
here $\mathcal N_i$ is a noise source at user $i$'s receiver side which follows a Gaussian distribution, $\mathcal N_i \sim \mathcal N(0,\sigma_i)$. Moreover we assume any pair $(\mathcal N_i, \mathcal N_j), i, j\in \Omega$ are correlated with correlation matrix $\Sigma_{ij}$:
\[ \Sigma_{ij} = \left[ \begin{array}{ccc}
\sigma_{ii} & \sigma_{ij}  \\
\sigma_{ji} & \sigma_{jj}  \end{array} \right].\] 

For each user we propose the following mechanism. Each user takes a threshold $\epsilon_i$ for detection. If $|\mathcal P_{ji} - \mathcal P_j\cdot h_{ji}| \leq \epsilon_i$, user $i$ will hold the punishment while taking the user $j$ being non-deviating.  On the other hand if $|\mathcal P_{ji} - \mathcal P_j\cdot h_{ji}| > \epsilon_i$, user $i$ will initialize the punishment phase for user $j$. Therefore we have
\begin{align}
\mathbb P^n_{ij} = \mathbb P(|\mathcal P_{ji} - \mathcal P_j\cdot h_{ji}| \leq \epsilon_i) = 2\Phi_i(\epsilon_i)-1\\
\mathbb P^p_{ij} = 1-\mathbb P(|\mathcal P_{ji} - \mathcal P_j\cdot h_{ji}| \leq \epsilon_i) = 2-2\Phi_i(\epsilon_i)
\end{align}
Here $\mathbb P^n_{ij}$ is the probability of detecting no deviation while $\mathbb P^p_{ij}$ is the probability of positive detection. We name this mechanism (MIM); and similarly with (MPM) we have enforceability as following. 

\begin{theorem}
$\boldsymbol{(\mathcal P^*,p)}$ is enforceable under mechanism (MIM) with large enough $\delta$, appropriately chosen $L$ and positive correlated noise sources.
\end{theorem}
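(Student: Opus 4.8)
The plan is to reuse the one-step deviation argument from the proof of the preceding (MPM) theorem, replacing every deterministic incentive comparison by its expectation under the noisy detection rule. The one genuinely new ingredient is that state transitions are now stochastic: even when all APs comply with $\boldsymbol{(\mathcal P^*,p)}$, each monitor $i$ flags a spurious deviation of AP $j$ with the false-positive probability $\mathbb P^p_{ij}=2-2\Phi_i(\epsilon_i)$, so the system can wander into a punishment phase off the equilibrium path. I would therefore begin by bounding the expected per-period loss this causes, and observe that it is driven down by enlarging each threshold $\epsilon_i$ (equivalently $\Phi_i(\epsilon_i)\to 1$); the resulting discounted correction term, which I denote $c_i(\epsilon)$, is finite and vanishes as the thresholds grow.

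Next I would re-establish the incentive constraint at state $S_0$. As in the noiseless case, compliance yields AP $i$ a continuation payoff of $\frac{1}{1-\delta}\mathcal U^*_i$, now reduced by $c_i(\epsilon)$. A one-step deviation to $(\mathcal P'_i,p'_i)$ buys an immediate bounded gain of at most $M_i$, but is caught — through the noisy power reading that underlies the detection rule — and punished into phase $S_i(1)$ only with a probability $q_i$ that is bounded below once the power deviation is large enough to raise $\mathcal U_i$ appreciably; with the complementary probability it goes undetected. Writing the deviation payoff as $M_i+\delta\,\mathbb E[\text{continuation}]$ and differencing against compliance, the comparison collapses to the same structure as in (MPM): a nonpositive one-shot term $\mathcal U^*_i-M_i$ offset by a strictly positive discounted punishment differential $\mathcal U^*_i-\mathcal U^p_i$, now carried by $q_i$ and net of $c_i(\epsilon)$. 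Taking $\delta$ close to $1$ and $L$ large, exactly as before, makes the differential dominate while absorbing the correction, so no profitable one-step deviation survives at $S_0$.

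The positive correlation of the noise sources is what lets me carry the punishment-phase analysis over essentially unchanged. I would invoke it to argue that the detection events $\{|\mathcal P_{ji}-\mathcal P_j h_{ji}|>\epsilon_i\}$ across the several monitors $i$ of a common deviator $j$ are positively associated, so a genuine deviation (whose statistic is shifted one-sidedly) is detected jointly — triggering the synchronized phase $S_j(1)$ at all monitors — with high probability, and the rare false positives also tend to occur together rather than splitting the population across conflicting beliefs about the current phase. With the state thus kept effectively common, requirements A.1--A.3 apply verbatim: no AP lowers its power at any punishment phase (since $\mathcal U_i$ is strictly increasing in $\mathcal P_i$), the punished AP $i$ has no profitable attempt-rate deviation by the same $\delta$-and-$L$ estimate used at $S_0$, and any bystander $j\ne i$ strictly prefers another's punishment to its own by A.2.

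The step I expect to be the main obstacle is controlling the off-path dynamics created by correlated false positives. Unlike the deterministic mechanism, the induced Markov chain on $\{S_0\}\cup\{S_i(t)\}$ does not rest at $S_0$ almost surely, so I must show that the stationary fraction of time spent in spurious punishment — and hence the correction $c_i(\epsilon)$ — is small enough not to overturn the incentive inequalities, while keeping each $\epsilon_i$ small enough that genuine deviations are still caught (so that $q_i$ stays bounded below). It is precisely this simultaneous demand — a low false-alarm rate together with a high genuine-detection rate and synchronized decisions — that the positive-correlation hypothesis is there to meet, and the delicate joint balancing of $(\epsilon_i,\delta,L)$ is where the real work of the proof resides.
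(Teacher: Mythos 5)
Your overall architecture matches the paper's: reuse the (MPM) one-step-deviation machinery state by state, and invoke the positive correlation of the noise sources to keep the monitors' detection/punishment decisions synchronized. The genuine gap is in how you rule out a deviator exploiting the tolerance window. You bound the detection probability $q_i$ below only ``once the power deviation is large enough to raise $\mathcal U_i$ appreciably,'' which leaves open precisely the new incentive problem that threshold detection creates: a small deviation $\mathcal P^*_i \to \mathcal P^*_i+\tau$ with $\tau\cdot h_{ij}$ comparable to $\epsilon_j$ yields a small but strictly positive gain $\Delta\,\mathcal U_i(\tau)$ while being flagged only rarely. The paper closes this with a marginal argument (its second lemma): for \emph{any} $\tau>0$ the punishment-trigger probability at monitor $j$ strictly increases, from $\mathbb P^{p}_{ji}=2-2\Phi_j(\epsilon_j)$ to $\mathbb P^{p,\tau}_{ji}=2-\bigl(\Phi_j(\epsilon_j+\tau\cdot h_{ij})+\Phi_j(\epsilon_j-\tau\cdot h_{ij})\bigr)$, the increment being larger for smaller $\epsilon_j$; the incentive constraint is then that this increment, multiplied by the discounted punishment differential $\sum_{t=0}^{L-1}\delta^t(\mathcal U^*_i-\mathcal U^p_i)$, exceeds $\Delta\,\mathcal U_i(\tau)$, which is arranged by a joint choice of $\epsilon_j$, $\delta$ and $L$. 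Without some version of this marginal comparison of detection probability against deviation gain, your argument does not establish enforceability against the continuum of small deviations, which is the substantive content of part \RNum{2} of the paper's proof. You gesture at the tension in your closing paragraph, but you never resolve it.

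On the other hand, you are more careful than the paper in one respect: you account for the equilibrium-path cost of false alarms (your correction term $c_i(\epsilon)$) and for the risk that correlated false positives desynchronize the monitors' beliefs about the current phase. The paper's enforceability proof silently ignores both issues (false alarms reappear only in the final theorem comparing (MIM) with (MPM)), so this bookkeeping is a genuine improvement over the published argument --- but it does not substitute for the missing marginal detection step above.
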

\begin{proof}
To see this threshold based detection strategy is enforceable, we need to show:
\begin{itemize}
\item[\RNum{1}.] User $i$ would like to adopt the threshold based detection strategy.
\item[\RNum{2}.] User will not take advantage of this threshold based strategy of other users.   
\end{itemize}
For \RNum{1}, we can design a similar punishment phases and punishment strategies to deter users from deviating from the pre-specified strategy. 
\begin{lemma}
Users will not deviate from the pre-specified threshold strategy. 
\end{lemma}
\begin{proof}
Denote the event user $i$ detects a deviation as $E_{i}$ and the event for no detection triggered as $\bar{E}_{i}$. We make the following assumptions over $\Sigma_{ij}$ so that 
\begin{align}
\mathbb P(E_i|E_j) > \mathbb P(\bar E_i|E_j),\mathbb P(E_j|E_i) > \mathbb P(\bar E_j|E_i)
\end{align}
Or equivalently $
\mathbb P(E_i|E_j)  > \frac{1}{2}, \forall i,j \in \Omega$; and we call the noise sources are positively correlated with each other. We show an example as following to demonstrate what kinds of noise resources have positive correlation.

\begin{exam}. \emph{Bi-variate Gaussian Distribution}

Consider 
\[ \Sigma_{ij} = \left[ \begin{array}{ccc}
1 & \rho_{ij}  \\
\rho_{ij} & 1  \end{array} \right].\] 
The sufficient condition for a positive correlation is given as (algebraic details omitted for concise presentation)
\begin{align}
\Phi(\frac{\epsilon_i-\sqrt{\rho_{ij}}\epsilon_j}{\sqrt{1-\rho^2_{ij}}})+\Phi(\frac{\epsilon_i+\sqrt{\rho_{ij}}\epsilon_j}{\sqrt{1-\rho^2_{ij}}})> 1+\frac{1}{2}
\end{align}
Following which a more loose condition comes as 
\begin{align}
\Phi(\epsilon\cdot \frac{1-\sqrt{\rho_{ij}}}{\sqrt{1-\rho_{ij}^2}})> \frac{3}{4} \label{mimc}
\end{align}
Therefore following right after Equation (\ref{mimc}) a simple and sufficient condition for the positive correlation is 
\begin{align}
\epsilon_i = \epsilon_j, \rho_{ij} \rightarrow 1
\end{align}
and $\epsilon_i, i \in \Omega$ are properly chosen. The results show that in a network with homogeneous noise sources we need the correlation factor to be high enough. This follows the intuition of designing deviation-proof mechanisms for private imperfect monitoring problems: only when the correlation of private observations are highly positive correlated.  
\end{exam}

Therefore when a user deviates, the probability of being detected by other users becomes higher than sticking with prescribed strategy. As long as we have a harsh and long enough punishment phase for each users, nobody will deviate. The design details follow similar path with (MIM) as in the perfect monitoring section and thus omitted. 

\end{proof}
Now we consider \RNum{2} and particularly we would like to see nobody could be able to take use of the ``cushion'' tolerance of other users by the threshold policy. 
\begin{lemma}
No user will deviate to take use of the threshold detection strategy of other users.
\end{lemma}
\begin{proof}
Without losing generality, consider user $i$. Suppose user $i$ increase $\mathcal P^*_i$ to $\mathcal P^*_i+\tau$. Denote the benefits of the deviation for user $i$ as $\Delta~\mathcal U_i(\tau)$. Firstly with the change, at user $j \neq i$'s side, the probability of punishment initialized becomes
\begin{align}
\mathbb P^{p,\tau}_{ji}& = 1-\mathbb P(-\epsilon_j \leq \mathcal N_j+\tau\cdot h_{ij} \leq \epsilon_j) \nonumber \\
&= 2-(\Phi_j(\epsilon_j+\tau\cdot h_{ij})+\Phi_j(\epsilon_j-\tau\cdot h_{ij}))
\end{align}
By basic algebra we can show (details omitted)
\begin{align}
2-(\Phi_j(\epsilon_j+\tau\cdot h_{ij})&+\Phi_j(\epsilon_j-\tau\cdot h_{ij}))\nonumber \\
& > 2-2\cdot\Phi_j(\epsilon_j)
\end{align}
i.e., $\mathbb P^{p,\tau}_{ji} > \mathbb P^{p}_{ji}$; and furthermore we have
$
\frac{\partial (\mathbb P^{p,\tau}_{ji} - \mathbb P^{p}_{ji})}{\partial \epsilon_j} <0
$. 
The intuition here is when $\epsilon_j$ is smaller, user $i$ would be detected with transmit power change $\tau$ more easily. Therefore by designing appropriate detection thresholds $\epsilon_j, j\in \Omega$ and punishment phases with large enough $\delta$ and $L$ we have the following
\begin{align}
(\mathbb P^{p,\tau} - \mathbb P^{p})\{&\underbrace{(\mathcal U^*_i-\mathcal U^p_i)+\delta (\mathcal U^*_i-\mathcal U^p_i)+...+\delta^{L-1}(\mathcal U^*_i-\mathcal U^p_i)}_{L}\}\nonumber \\
&> \Delta~\mathcal U_i(\epsilon)
\end{align}
Therefore each user would be deterred from deviating under this threshold enabled detection mechanism. 
\end{proof}

\end{proof}

\begin{theorem}
The (MIM) help increases system performance compared with (MPM) under a imperfect monitoring system.
\end{theorem}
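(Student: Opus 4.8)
The plan is to model each mechanism, operating under the noisy observation model introduced above, as a Markov chain on the finite state space $\{S_0\}\cup\{S_i(t): i\in\Omega,\, t=1,\dots,L\}$, and then to compare the long-run (discounted or time-average) system utility $\sum_{i\in\Omega}\mathcal U_i$ induced by the stationary behavior of each chain. The key quantity is the fraction of time the system spends in the cooperative state $S_0$, where every AP enjoys the social optimal payoff $\mathcal U^*_i$, versus the punishment states $S_i(t)$, where payoffs drop to the much smaller $\mathcal U^p_i$. Since requirement A.1 gives $\mathcal U^p_i<\mathcal U^*_i$, system performance is monotone in the residence probability of $S_0$, so the whole comparison reduces to contrasting that residence probability under the two mechanisms.

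First I would make explicit how (MPM) behaves once the channel is actually noisy. Because (MPM) was designed under perfect monitoring, its detection rule is effectively zero-tolerance: AP $i$ declares a deviation by AP $j$ whenever the observed $\mathcal P_{ji}$ differs at all from the nominal $\mathcal P^*_j\cdot h_{ji}$. Under the continuous Gaussian noise $\mathcal N_i\sim\mathcal N(0,\sigma_i)$ we have $\mathbb P(\mathcal N_i=0)=0$, so the false-alarm probability out of $S_0$ is essentially $1$: starting from $S_0$, some monitoring AP raises an alarm almost surely in every period, driving the system into a punishment phase. Consequently the stationary probability that (MPM) resides in $S_0$ is, up to measure-zero events, zero, and its long-run system utility is pinned near the punishment payoff $\sum_{i\in\Omega}\mathcal U^p_i$.

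Next I would carry out the same computation for (MIM). Here the positive threshold $\epsilon_i$ tolerates small observation errors, so the probability that no monitoring AP raises a false alarm while all play $(\mathcal P^*,p)$ is strictly positive and bounded away from $0$ for any $\epsilon_i>0$, since each marginal no-detection probability $\mathbb P^n_{ij}=2\Phi_i(\epsilon_i)-1$ is strictly positive. Writing the balance equations of the (MIM) chain then shows that its stationary probability of $S_0$ is strictly positive. A strictly positive residence time in $S_0$, where the per-period utility strictly exceeds that of any punishment phase, yields a strictly larger long-run system utility than (MPM), which spends essentially no time there; combining the two computations gives the claimed inequality.

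The main obstacle I anticipate is making the stationary-distribution comparison fully rigorous rather than heuristic: one must set up the two transition kernels carefully, confirm ergodicity of each chain, and translate ``fraction of time in $S_0$'' into the discounted objective used in $(P^*_i)$. The cleanest route is probably to bound the time-average utility of each chain below and above by its $S_0$-residence probability times $\big(\sum_{i\in\Omega}\mathcal U^*_i-\sum_{i\in\Omega}\mathcal U^p_i\big)$ plus the baseline $\sum_{i\in\Omega}\mathcal U^p_i$, so that the performance gap between the two mechanisms collapses to the gap between their $S_0$-residence probabilities, which the noise analysis shows is strictly positive for (MIM) and zero for (MPM).
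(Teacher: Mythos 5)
Your proposal is correct and rests on the same core idea as the paper's argument, but the paper only offers a two-sentence heuristic sketch (``introducing the $\epsilon$ tolerance threshold decreases the false alarm probability, and punishment degrades performance more than tolerated noise does''), whereas you supply the formalization it omits. Modeling each mechanism as a Markov chain on $\{S_0\}\cup\{S_i(t)\}$ and reducing the comparison to the stationary residence probability of $S_0$ is exactly the right way to make the claim precise, and your key observation --- that a zero-tolerance rule applied to continuous Gaussian observations false-alarms with probability one, so (MPM) is trapped near the punishment payoff, while (MIM)'s strictly positive no-detection probability $2\Phi_i(\epsilon_i)-1$ keeps its $S_0$-residence probability bounded away from zero --- is the substantive content the paper's sketch gestures at without proving. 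Two small points to tie up: first, the paper's sketch also weighs the cost of \emph{tolerated small deviations} against the cost of punishment; you can dispose of this cleanly by invoking the preceding enforceability theorem for (MIM), under which on-path play is exactly $(\mathcal P^*,\mathbf{p})$ so the only on-path randomness is noise, not deviation. Second, your caveat about reconciling the time-average and discounted criteria is real but benign: for a fixed ergodic chain the discounted value converges to the time-average value as $\delta\to 1$, and the theorem is already stated for $\delta$ large, so the strict gap in $S_0$-residence probabilities survives the translation. With those two remarks your argument is a complete proof, and a stronger one than the paper provides.
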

\begin{proof} 
We sketch the basic idea here. Obviously by introducing in the $\epsilon$ tolerance threshold we decreased the false alarm probability. As the network performance degradation caused by punishment is more severe than the degradation by the other users' slight deviation (light deviation under tolerance) or by the noise, we know the (MIM) helps improve system performance. 
\end{proof}

\section{Numerical Experiments}\label{sec:simulate}

In this section, we provide simulation results to show system performance under the greedy and randomized search algorithms (denoted as ``Greedy'' and ``\textbf{P\_RAND}'' in the figures, respectively).  We further compare them with the maximum transmission power strategy (``Max''), \textbf{PPHY} and \textbf{PMAC} respectively.  The WLAN network's topology used in the experiment is randomly generated, with 10 APs placed according to a uniform distribution in a square area; this topology is shown in Fig.\ref{top_2}. 
\begin{figure}[h!]
\centering
\includegraphics[width=0.5\textwidth,height=0.3\textwidth]{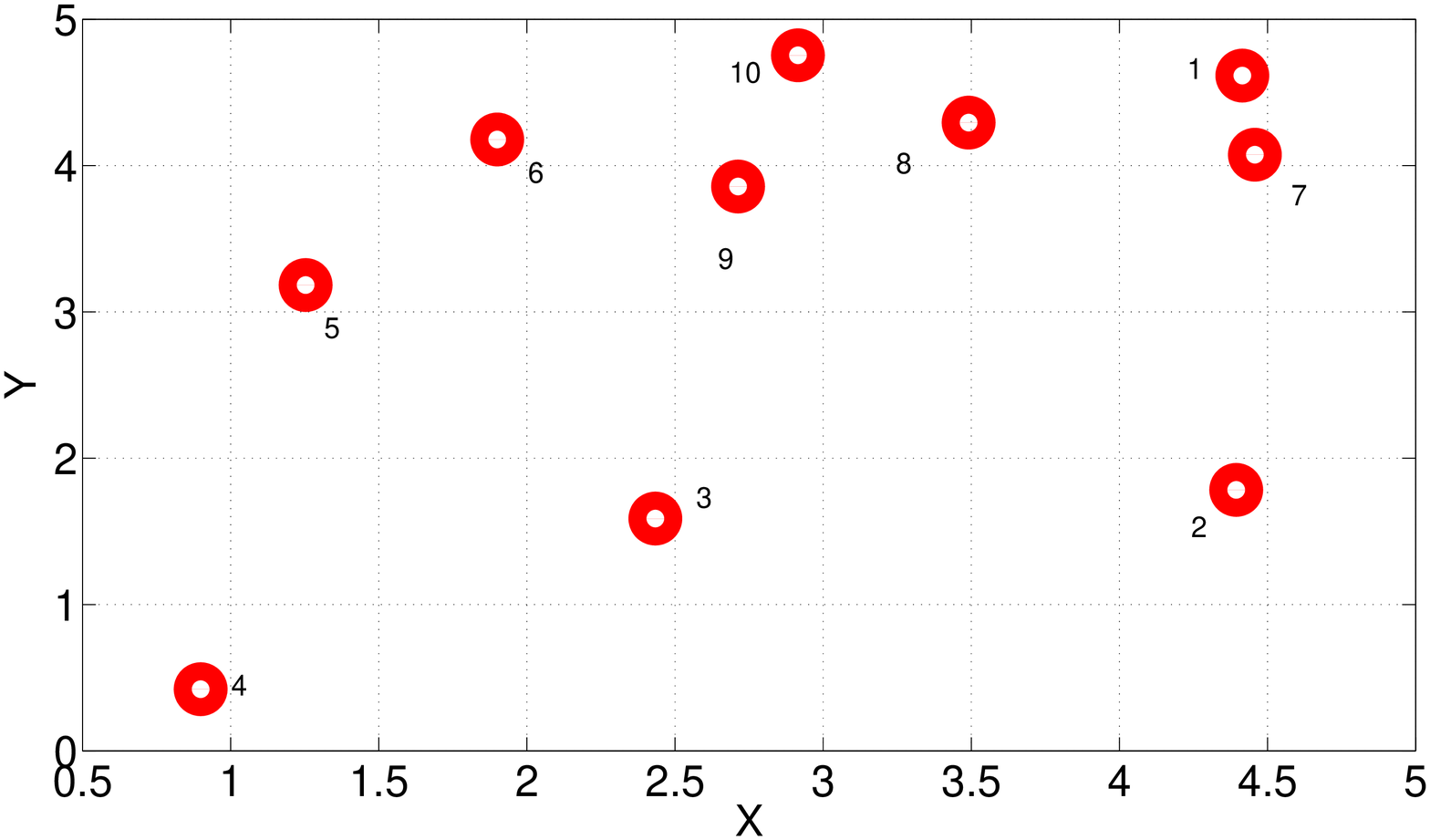}
\caption{Topology of the AP Network
\com{I suggest we add labels/indices to all the APs...}}\label{top_2}
\end{figure}

\subsection{Optimization with dual effects}
 
We begin by comparing the computed power levels and the resulting system-wide throughput under the greedy and randomized search algorithms and the fixed, maximum power scheme. 
\begin{figure}[h!]
\centering
\includegraphics[width=0.5\textwidth,height=0.2\textwidth]{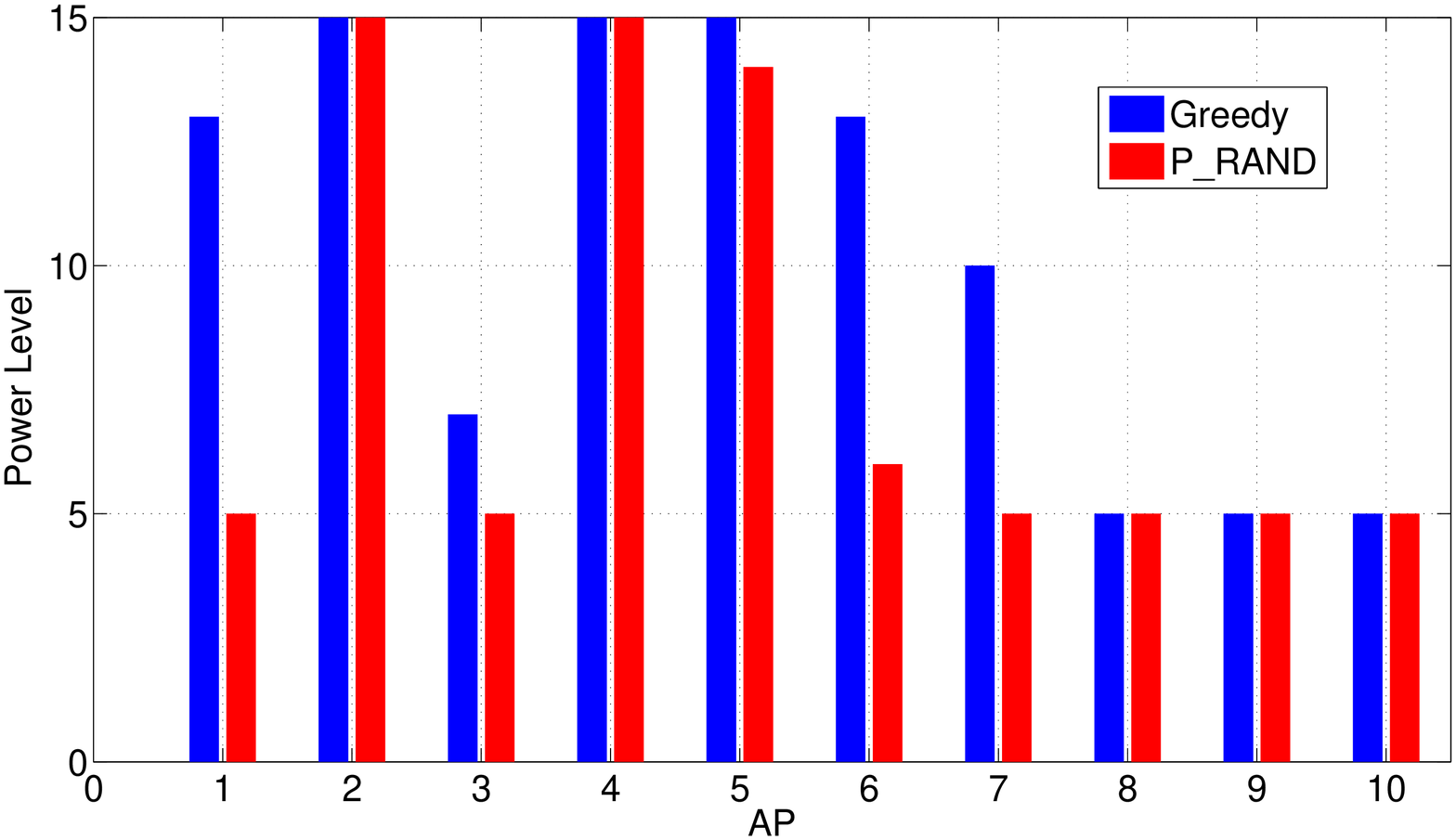}
\caption{Output power levels under Greedy Search \& \textbf{P\_RAND}.} 
\label{sim1_power}
\end{figure}

\begin{figure}[h!]
\centering
\includegraphics[width=0.5\textwidth,height=0.2\textwidth]{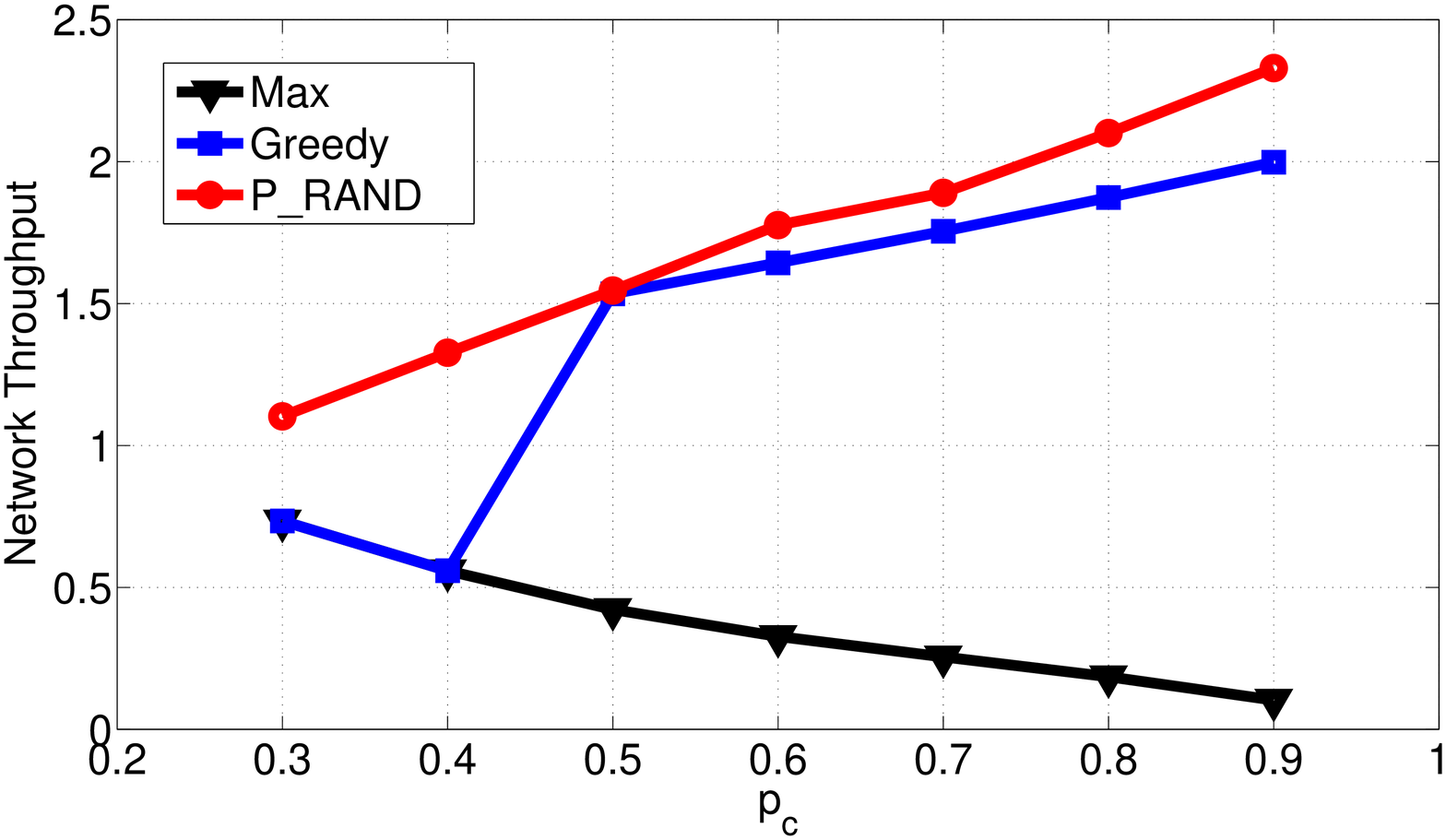}
\caption{Comparison of system-wide throughput.}  \label{thr_1}
\end{figure}


\rev{For the first set of results, we fix $p_c = 0.6$ and a maximum transmission power level of 15 for all APs.  The resulting optimal power profile is depicted in Fig.\ref{sim1_power}. Here to get the ``optimal solution'' we utilize \textbf{P\_RAND} to solve \textbf{(PL)} and \textbf{(PU)} separately and then choose the one that gives us a better total throughput. }  \com{Here by ``optimal'' do we refer to the optimal solution to (PU) or (PL)?} 
We see that in this case APs $(8,9,10)$'s power levels are far short of the maximum level. This reflects the need to avoid 
excessive interference with each other as they are clustered in a relatively crowded neighborhood.  \rev{APs $(2,4)$ are sitting relatively ``alone'' and thus they could transmit at a higher power. Similar observations can be made at each AP. } 
\com{With complete labeling on the topology we should be able to complete this paragraph nicely...} 

Next, the system performance is shown in Fig. \ref{thr_1} as a function of the attempt rate $p_c$.  It is interesting to note the opposite trends exhibited by using optimal power tuning vs. always using maximum power levels as the attempt rate increases.  As the network gets busier (more congested with higher attempt rate), the maximum power levels exacerbates the problem and the system throughput degrades even though the APs are trying harder.  On the other hand, using optimal power-tune, as the network becomes more congested, the APs react by decreasing their transmission powers appropriately so that the system throughput actually improves. 
By either the greedy search or the randomized search algorithm, our optimal power-tune problem helps achieve a significant throughput performance improvement compared to the static maximum transmission power scheme. 

We end this part with a look into the convergence performance of \textbf{P\_RAND}, shown in Fig.\ref{conv_3}. It is seen that our randomized search algorithm converges quickly to the end solution; under the same simulation setting, the greedy policy converges to a solution of system throughput at around 1.6. 
\begin{figure}[h!]
\centering
\includegraphics[width=0.5\textwidth,height=0.2\textwidth]{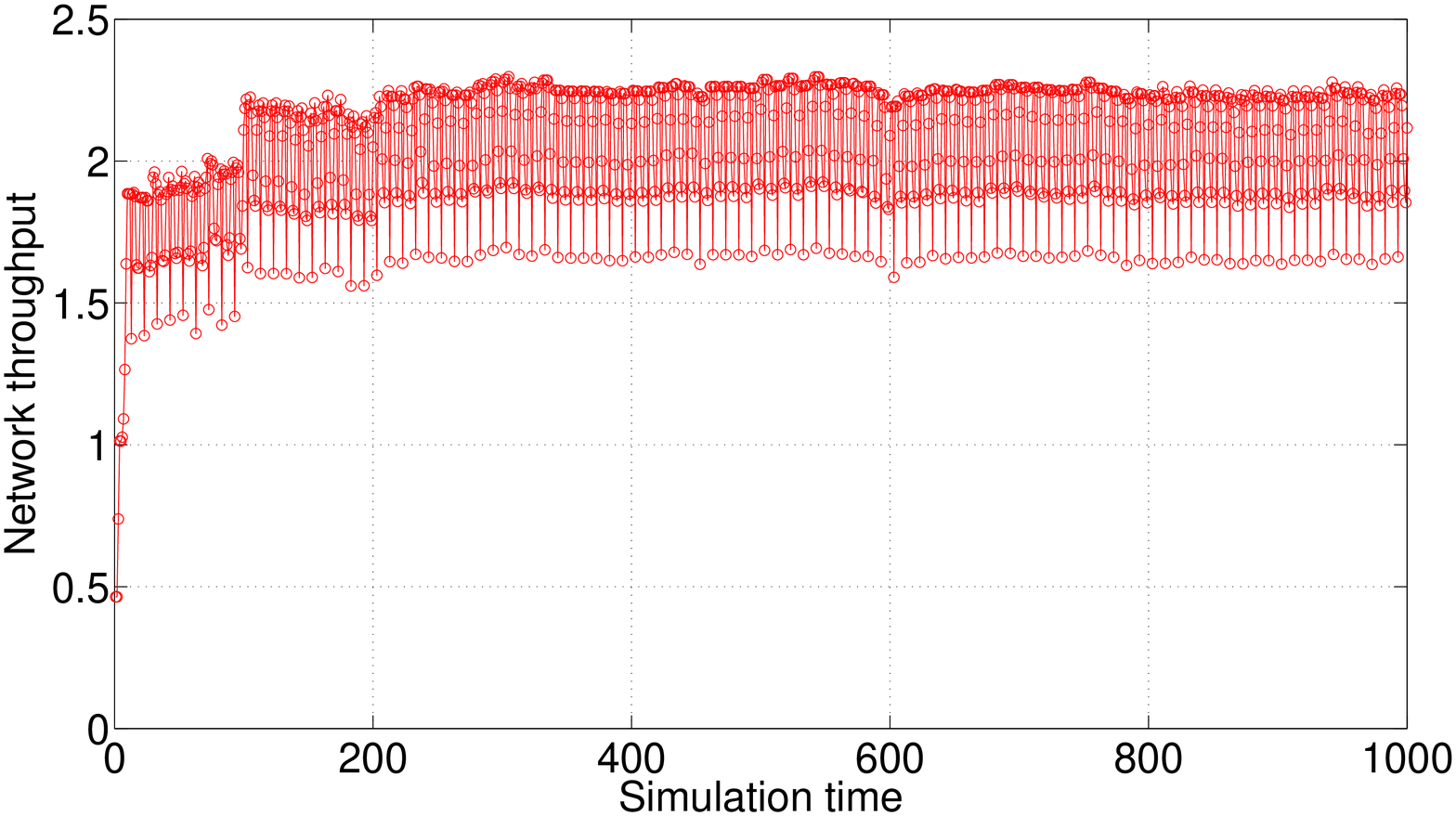}
\caption{Convergence of \textbf{P\_RAND}\com{suggest changing the y-label to ``network throughput'' just to be consistent}}\label{conv_3}
\end{figure}

\subsection{Compare with \textbf{PPHY}}
We next compare our optimization model with dual effect to the model given by \textbf{PPHY}, which tries to maximize the total rate at the physical layer without considering contention. 
The achievable throughput at each AP node (\rev{under attempt rate $p_c = 0.6$} ) is shown in Fig. \ref{PHY2} while the transmission power returned by \textbf{PPHY} vs. that by \textbf{P\_RAND} is shown in Fig. \ref{PHY1}. \com{at what $p_c$ level?} 
 \begin{figure}[h!]
\centering
\includegraphics[width=0.5\textwidth,height=0.2\textwidth]{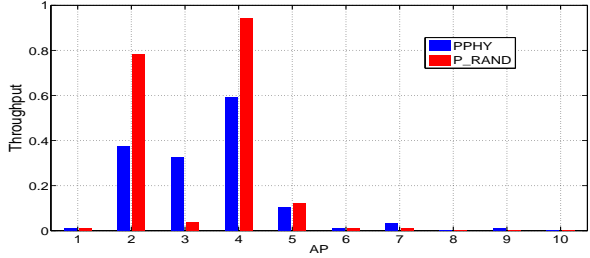}
\caption{AP's individual throughput}\label{PHY2}
\end{figure}
 \begin{figure}[h!]
\centering
\includegraphics[width=0.5\textwidth,height=0.2\textwidth]{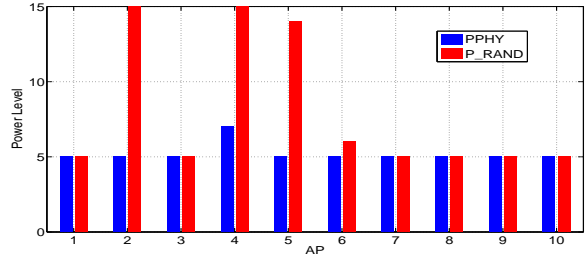}
\caption{power-tune results}\label{PHY1}
\end{figure}

We see a clear difference in how power levels are tuned and the resulting throughput across different AP nodes. 
The reason is that under \textbf{PPHY} each AP treats all other APs as noise resources. However, due to CSMA/CA, no parallel transmission would be allowed for APs within the carrier sensing range and thus the first-order noises (those from the closest neighbors) could be removed. Therefore APs could increase their power to some extent without contributing too much to their neighbors' noise level.  This is why we observe a few APs with much higher power under \textbf{P\_RAND} than under \textbf{PPHY}.  By contrast, with only PHY layer  optimization APs cannot take full advantages of the noise-free property of CSMA, and therefore act  conservatively.

To make our comparison complete we present the total system throughput performance in Fig. \ref{PHY3}.  We see \textbf{PPHY} is clearly out-performed by our optimization model especially under higher attempt rate. 
 \begin{figure}[h!]
\centering
\includegraphics[width=0.5\textwidth,height=0.2\textwidth]{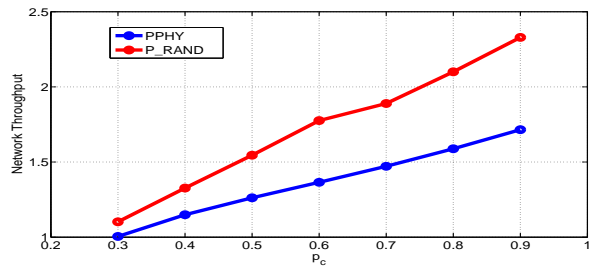}
\caption{Comparison of network's total throughput}\label{PHY3}
\end{figure}

\subsection{Compare with \textbf{PMAC}}
We perform a similar comparison with \textbf{PMAC}.  We start with a comparison of overall contention order under different SNR constraints in Fig.\ref{MAC1} (\rev{under attempt rate $p_c = 0.6$, same for Fig.\ref{MAC2})}. With a higher base SNR, the required transmission power is potentially higher under \textbf{PMAC}, and thus the total contention order increases.  The contention order under \textbf{P\_RAND} on the other hand stays constant. 
\begin{figure}[h!]
\centering
\includegraphics[width=0.5\textwidth,height=0.2\textwidth]{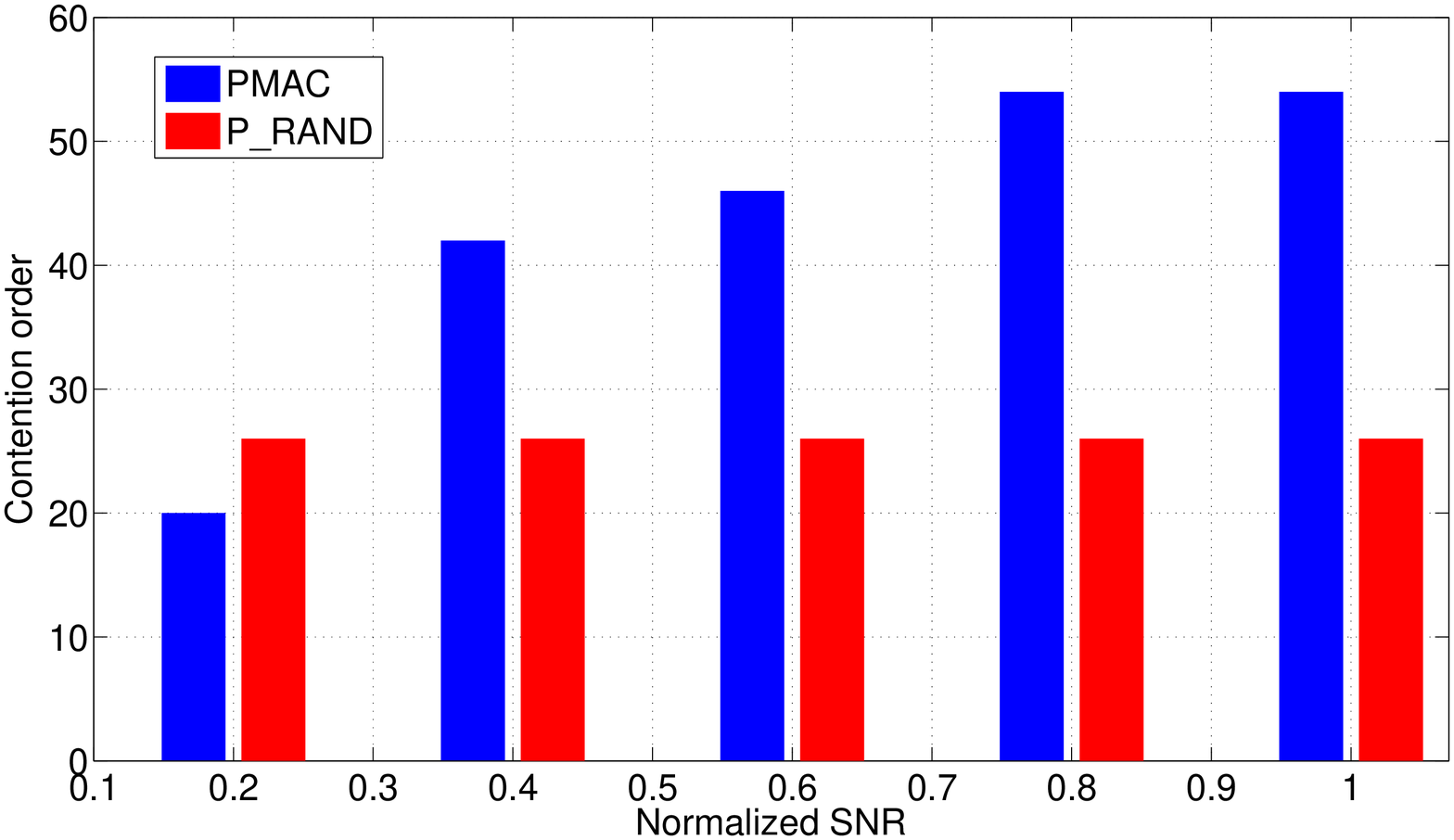}
\caption{Contention order comparison}\label{MAC1}
\end{figure}

\com{Shouldn't we also show an individual power level result as in the previous subsection?}
\response{yang: here i'm using the contention order one since 1. this is of particular interests from this section; 2. it to some extent reflects the power tuning; 3. space limitation..and it will look quite similar with results in previous sub-sections.}
We see that in order to reduce the contention order the APs again act rather conservatively in reducing their power levels.  This leads to a drop in noise resistance and the overall network throughput, as shown in 
Fig.\ref{MAC2} and Fig. \ref{MAC3}, respectively.  
\com{These figures are at what $p_c$ level?} 

\begin{figure}[h!]
\centering
\includegraphics[width=0.5\textwidth,height=0.2\textwidth]{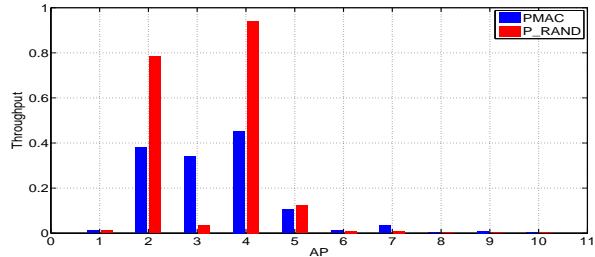}
\caption{APs' individual throughput}\label{MAC2}
\end{figure}
\begin{figure}[h!]
\centering
\includegraphics[width=0.5\textwidth,height=0.2\textwidth]{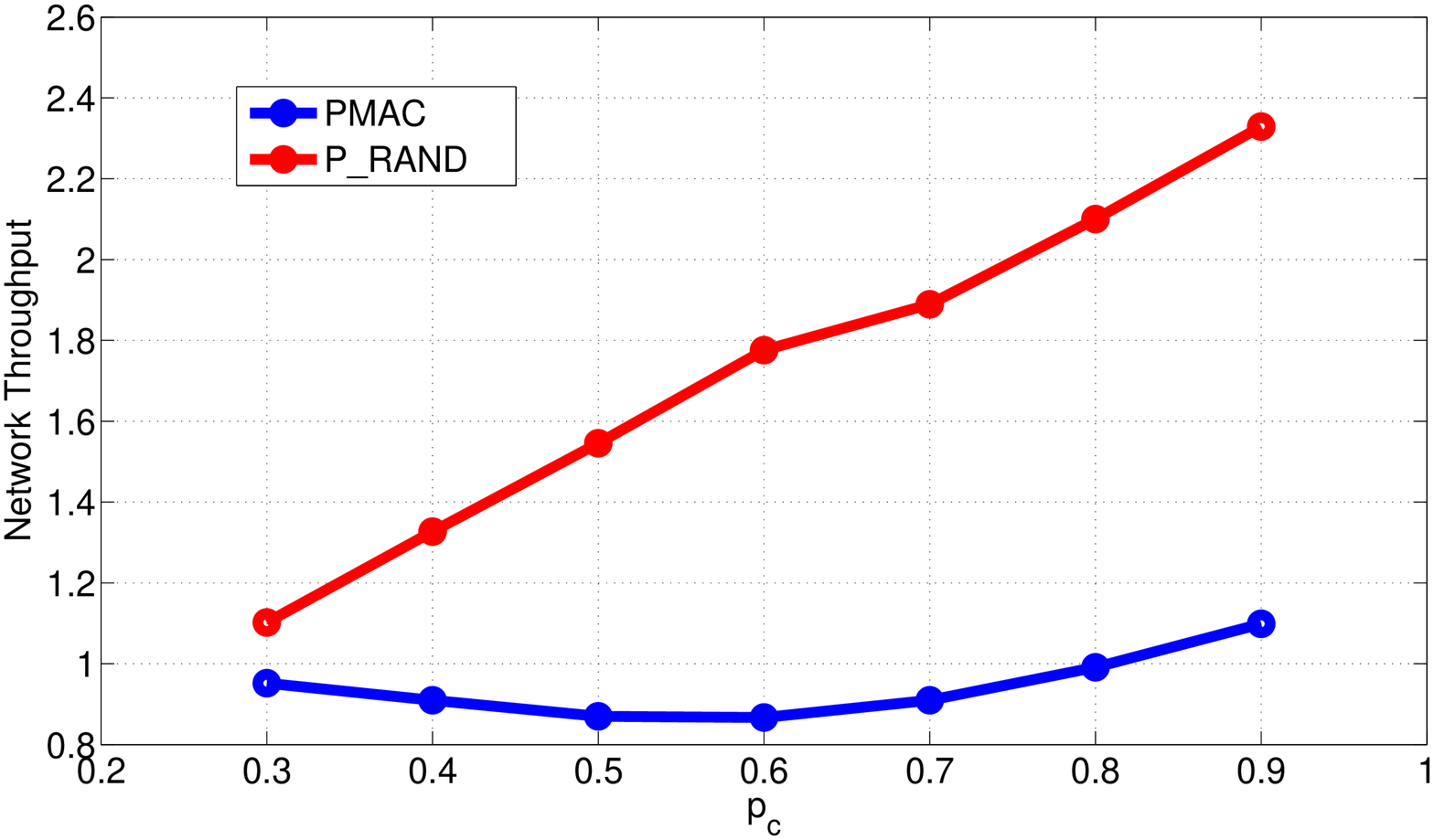}
\caption{Comparison of network's total throughput}\label{MAC3}
\end{figure}
%
%
\section{Related works}\label{sec:related}

There have been many classical PHY layer power-tune studies using Shannon's capacity formula. For example, Kim et al. investigated a transmit power and carrier sensing threshold tuning problem for improving spatial reuse in \cite{Kim:2006:ISR:1161089.1161131}. Chiang et al. looked into the transmit power control problem through  management of interference, energy and connectivity in \cite{Chiang:2008:PCW:1454705.1454706}. 
In \cite{Phan:2009:CDP:1817770.1818091}, Phan et al. investigate distributed power control problem on physical layer; a distributed algorithm is given and critical performance criteria, such as convergence are analyzed. In \cite{5701695}, Tan et al. analyze several multi-user spectrum management problems with focus on power control. 

More recently, power control problems have been analyzed under game theoretical framework. Sharma et al. proposed a game theoretical approach for decentralized power allocation in \cite{Sharma:2008:GAD:1536956.1536958}. In \cite{DBLP:journals/ton/TekinLSHA12}, a congestion game model is proposed to analyze power control problem as a form of resource allocation. Equilibrium strategies have been given under certain assumptions. In \cite{DBLP:journals/corr/abs-1111-2456}, a power control problem is modeled as repeated games with strategic users and intervention theory is proposed to induce target strategy from users. Imperfect monitoring repeated game model is analyzed in \cite{DBLP:journals/corr/abs-1201-3328} with the assumption of a Local Spectrum Server (LSS). In \cite{DBLP:conf/infocom/WanCDWY12}, Wan et al. consider a power control problem w.r.t. reducing contention order on the link layer while keeping the physical layer interference under certain levels. 

In terms of computation, for standard integer optimization (or combinatorial optimization) problems researchers typically seek relaxation to convert the problem into a continuous problem in the hope it can be solved by standard LP or convex algorithms; in \cite{Young98,Kirkpatrick83optimizationby,10.1109/ICNP.2007.4375832}, efficient search algorithms have been proposed to tackle finite space optimization problems. 

\section{Conclusion}\label{sec:conclusion}
With the proliferation of densely-deployed WLANs, power tuning becomes a critical problem as it has major impacts on SNR as well as contention levels in these networks. Prior works mostly focused on one of these two issues in pursuit of either higher throughput or lower contention level, but not both. 

In this paper, we have investigated the network throughput optimization problem by optimizing both spatial reuse (MAC) and SNR (PHY) performance at the same time. We have presented the complexity of solving the joint optimization problem and derived approximations to make it tractable. Then, by analyzing the problem structure, we have proposed efficient and near-optimal solutions. In order to demonstrate the effectiveness of our approach, we compared our results with several models optimizing on only either PHY or MAC layer. A clear advantage has been demonstrated for the cross-layer approach.

In the second part, we first show that the N.E. for the decentralized WLAN appears to be inefficient; then we show specific punishment mechanism can be designed to enforce the social near-optimal solution with our system under both perfect and imperfect monitoring environment.

\section*{Acknowledgment}
This work was partly performed while the first author interned with Juniper Networks, Inc. We would like to extend our appreciations to David Aragon, Joe Williams, and many others for the inspiring discussions and helpful comments.
 
\bibliography{myref}

\begin{thebibliography}{10}

\bibitem{1987}
Shoshana Anily and Awi Federgruen.
\newblock Ergodicity in parametric nonstationary markov chains: An application
  to simulated annealing methods.
\newblock {\em Operations Research}, 35(6):pp. 867--874, 1987.

\bibitem{Chiang:2008:PCW:1454705.1454706}
Mung Chiang, Prashanth Hande, Tian Lan, and Chee~Wei Tan.
\newblock Power control in wireless cellular networks.
\newblock {\em Found. Trends Netw.}, 2(4):381--533, April 2008.

\bibitem{Kim:2006:ISR:1161089.1161131}
Tae-Suk Kim, Hyuk Lim, and Jennifer~C. Hou.
\newblock Improving spatial reuse through tuning transmit power, carrier sense
  threshold, and data rate in multihop wireless networks.
\newblock In {\em Proceedings of the 12th annual international conference on
  Mobile computing and networking}, MobiCom '06, pages 366--377, New York, NY,
  USA, 2006. ACM.

\bibitem{Kirkpatrick83optimizationby}
S.~Kirkpatrick, C.~D. Gelatt, and M.~P. Vecchi.
\newblock Optimization by simulated annealing.
\newblock {\em Science}, 220:671--680, 1983.

\bibitem{Phan:2009:CDP:1817770.1818091}
Khoa~T. Phan, Long~Bao Le, Sergiy~A. Vorobyov, and Tho Le-Ngoc.
\newblock Centralized and distributed power allocation in multi-user wireless
  relay networks.
\newblock In {\em Proceedings of the 2009 IEEE international conference on
  Communications}, ICC'09, pages 4396--4400, Piscataway, NJ, USA, 2009. IEEE
  Press.

\bibitem{Sharma:2008:GAD:1536956.1536958}
Shrutivandana Sharma and Demosthenis Teneketzis.
\newblock A game-theoretic approach to decentralized optimal power allocation
  for cellular networks.
\newblock In {\em Proceedings of the 3rd International Conference on
  Performance Evaluation Methodologies and Tools}, ValueTools '08, pages
  1:1--1:10, ICST, Brussels, Belgium, Belgium, 2008. ICST (Institute for
  Computer Sciences, Social-Informatics and Telecommunications Engineering).

\bibitem{10.1109/ICNP.2007.4375832}
Yang Song, Chi Zhang, and Yuguang Fang.
\newblock Throughput maximization in multi-channel wireless mesh access
  networks.
\newblock {\em 2012 20th IEEE International Conference on Network Protocols
  (ICNP)}, 0:11--20, 2007.

\bibitem{DBLP:conf/globecom/TanPC05}
Chee~Wei Tan, Daniel~P. Palomar, and Mung Chiang.
\newblock Solving nonconvex power control problems in wireless networks: low
  sir regime and distributed algorithms.
\newblock In {\em GLOBECOM}, page~6, 2005.

\bibitem{DBLP:journals/ton/TekinLSHA12}
Cem Tekin, Mingyan Liu, Richard Southwell, Jianwei Huang, and Sahand Haji~Ali
  Ahmad.
\newblock Atomic congestion games on graphs and their applications in
  networking.
\newblock {\em IEEE/ACM Trans. Netw.}, 20(5):1541--1552, 2012.

\bibitem{DBLP:conf/infocom/WanCDWY12}
Peng-Jun Wan, Dechang Chen, Guojun Dai, Zhu Wang, and F.~Frances Yao.
\newblock Maximizing capacity with power control under physical interference
  model in duplex mode.
\newblock In {\em INFOCOM}, pages 415--423, 2012.

\bibitem{5701695}
Chee wei Tan, S.~Friedland, and S.H. Low.
\newblock Spectrum management in multiuser cognitive wireless networks:
  Optimality and algorithm.
\newblock {\em Selected Areas in Communications, IEEE Journal on},
  29(2):421--430, 2011.

\bibitem{DBLP:journals/corr/abs-1111-2456}
Yuanzhang Xiao, Jaeok Park, and Mihaela van~der Schaar.
\newblock Repeated games with intervention: Theory and applications in
  communications.
\newblock {\em CoRR}, abs/1111.2456, 2011.

\bibitem{DBLP:journals/corr/abs-1201-3328}
Yuanzhang Xiao and Mihaela van~der Schaar.
\newblock Dynamic spectrum sharing among repeatedly interacting selfish users
  with imperfect monitoring.
\newblock {\em CoRR}, abs/1201.3328, 2012.

\bibitem{Young98}
H.~Peyton Young.
\newblock {\em Individual Strategy and Social Structure}.
\newblock Princeton University Press, 1998.

\end{thebibliography}
\end{document}